\newtheorem{theorem}{Theorem}[]
\newtheorem{lemma}[theorem]{Lemma}
\newcommand{\floor}[1]{\left\lfloor #1\right\rfloor}
\newcommand{\RE}{\mathbb{R}}            
\newcommand{\eps}{\varepsilon}          
\newcommand{\inv}[1]{\frac{1}{#1}}
\DeclareMathOperator{\conv}{conv}
\DeclareMathOperator{\radius}{radius}
\DeclareMathOperator{\width}{width}
\DeclareMathOperator{\thick}{thick}
\newcommand{\etal}{\textit{et al.}}
\DeclareMathOperator{\polylog}{polylog}
\newcommand{\plog}{\polylog\inv\eps}
\begin{document}

\title{Approximate Convex Intersection Detection with Applications to Width and Minkowski Sums}

\author{%
	Sunil Arya\thanks{Research supported by the Research Grants Council of Hong Kong, China under project number 16200014.}\\
		Department of Computer Science and Engineering \\
		The Hong Kong University of 
        Science and Technology \\
 		Clear Water Bay, Kowloon, Hong Kong\\
		arya@cse.ust.hk \\
		\and
	Guilherme D. da Fonseca\thanks{Research supported by the European Research Council under ERC Grant Agreement number 339025 GUDHI (Algorithmic Foundations of Geometric Understanding in Higher Dimensions).}\\
		Universit\'{e} Clermont Auvergne,\\
		LIMOS, and INRIA Sophia Antipolis\\
 		France\\
		fonseca@isima.fr
		\and
	David M. Mount\thanks{Research supported by NSF grant CCF--1618866.}\\
		Department of Computer Science and \\
		Institute for Advanced Computer Studies \\
		University of Maryland \\
 		College Park, Maryland 20742 \\
		mount@cs.umd.edu \\
}

\date{}

\maketitle

\begin{abstract}
Approximation problems involving a single convex body in $\mathbb{R}^d$ have received a great deal of attention in the computational geometry community. In contrast, works involving multiple convex bodies are generally limited to dimensions $d \leq 3$ and/or do not consider approximation. In this paper, we consider approximations to two natural problems involving multiple convex bodies: detecting whether two polytopes intersect and computing their Minkowski sum. Given an approximation parameter $\varepsilon > 0$, we show how to independently preprocess two polytopes $A,B \subset \mathbb{R}^d$ into data structures of size $O(1/\varepsilon^{(d-1)/2})$ such that we can answer in polylogarithmic time whether $A$ and $B$ intersect approximately. More generally, we can answer this for the images of $A$ and $B$ under affine transformations. Next, we show how to $\varepsilon$-approximate the Minkowski sum of two given polytopes defined as the intersection of $n$ halfspaces in $O(n \log(1/\varepsilon) + 1/\varepsilon^{(d-1)/2 + \alpha})$ time, for any constant $\alpha > 0$. Finally, we present a surprising impact of these results to a well studied problem that considers a single convex body. We show how to $\varepsilon$-approximate the width of a set of $n$ points in $O(n \log(1/\varepsilon) + 1/\varepsilon^{(d-1)/2 + \alpha})$ time, for any constant $\alpha > 0$, a major improvement over the previous bound of roughly $O(n + 1/\varepsilon^{d-1})$ time.
\end{abstract}


\section{Introduction} \label{s:intro}

Approximation problems involving a single convex body in $d$-dimensional space have received a great deal of attention in the computational geometry community~\cite{AHV04,AFM17b,AFM17c,AFM17a,AFM18a,Cha06,Cha17,YAP08}. Recent results include near-optimal algorithms for approximating the convex hull of a set of points~\cite{AFM17b,Cha17}, as well as an optimal data structure for answering approximate polytope membership queries~\cite{AFM17a}.
In contrast, works involving multiple convex bodies are generally limited to dimensions $d \leq 3$ and/or do not consider approximation~\cite{AGHRS00,BaL15,FHW90,GXG08,VaM06}. In this paper we present new approximation algorithms to natural problems that either involve multiple convex polytopes or result from such an analysis:
\begin{itemize}
\item Determining whether two convex polytopes $A$ and $B$ intersect

\item Computing the Minkowski sum, $A \oplus B$, of two convex polytopes

\item Computing the width of a convex polytope $A$ (which results from an analysis of the Minkowski sum $A \oplus (-A)$)
\end{itemize}

Throughout we assume that the input polytopes reside in $\RE^d$ and are full-dimensional, where the dimension $d$ is a fixed constant. Polytopes may be represented either as the convex hull of $n$ points (\emph{point representation}) or as the intersection of $n$ halfspaces (\emph{halfspace representation}). In either case, $n$ denotes the \emph{size} of the polytope.

\subsection{Convex Intersection} \label{s:intersection-results}

Detecting whether two geometric objects intersect and computing the region of intersection are fundamental problems in computational geometry. Geometric intersection problems arise naturally in a number of applications. Examples include geometric packing and covering, wire and component layout in VLSI, map overlay in geographic information systems, motion planning, and collision detection. Several surveys present the topics of collision detection and geometric intersection~\cite{JTT01,LiG98,Mou17}.

The special case of detecting the intersection of convex objects has received a lot of attention in computational geometry. The static version of the problem has been considered in $\RE^2$~\cite{oRo98,Sha75} and $\RE^3$~\cite{Cha92,MuP78}. The data structure version where each convex object is preprocessed independently has been considered in $\RE^2$~\cite{BaL15,ChD80,ChD87,DoK83} and $\RE^3$~\cite{BaL15,ChD87,DoK83,DoK90}.

Recently, Barba and Langerman~\cite{BaL15} considered the problem in higher dimension. They showed how to preprocess convex polytopes in $\RE^d$ so that given two such polytopes that have been subject to affine transformations, it can be determined whether they intersect each other in logarithmic time. However, the preprocessing time and storage grow as the combinatorial complexity of the polytope raised to the power $\floor{d/2}$. Since the combinatorial complexity of a polytope with $n$ vertices can be as high as $\Theta(n^{\floor{d/2}})$, the storage upper bound is roughly $O(n^{d^2/4})$. This high complexity motivates the study of approximations to the problem.

We define approximation in a manner that is sensitive to direction. Consider any convex body $K$ in $\RE^d$ and any $\eps > 0$. Given a nonzero vector $v \in \RE^d$, define $\Pi_v(K)$ to be the minimum slab defined by two hyperplanes that enclose $K$ and are orthogonal to $v$. Define the \emph{directional width} of $K$ with respect to $v$, $\width_v(K)$, to be the perpendicular distance between these hyperplanes. Let $\Pi_{v,\eps}(K)$ be the central expansion of $\Pi_v(K)$ by a factor of $1+\eps$, and define $K_\eps$ to be the intersection of these expanded slabs over all unit vectors $v$. It can be shown that for any $v$, $\width_v(K_\eps) = (1+\eps) \width_v(K)$. An \emph{$\eps$-approximation} of $K$ is any set $K'$ (which need not be convex) such that $K \subseteq K' \subseteq K_\eps$. This defines an \emph{outer} approximation. It is also possible to define an analogous notion of \emph{inner} approximation in which each directional width is no smaller than $1-\eps$ times the true width. Our results can be extended to either type of approximation.

A related notion studied extensive in the literature is that of $\eps$-kernels.
Given a discrete point set $S$ in $\RE^d$, an \emph{$\eps$-kernel} of $S$ is any subset $Q \subseteq S$ such that $\conv(Q)$ is an inner $\eps$-approximation of $\conv(S)$~\cite{AHV04}. It is well known that $O(1/\eps^{(d-1)/2})$ points are sufficient and sometimes necessary in an $\eps$-kernel.
Kernels efficiently approximate the convex hull and as such have been used to obtain fast approximation algorithms to several problems such as diameter, minimum width, convex hull volume, minimum enclosing cylinder, minimum enclosing annulus, and minimum-width cylindrical shell~\cite{AHV04,AHV05}.

In the $\eps$-approximate version of convex intersection, we are given two convex bodies $A$ and $B$ and a parameter $\eps > 0$. If $A \cap B \neq \emptyset$, then the answer is ``\emph{yes}.'' If $A_{\eps} \cap B_{\eps} = \emptyset$, then the answer is ``\emph{no}.'' Otherwise, either answer is acceptable. The \emph{$\eps$-approximate polytope intersection problem} is defined as follows. A collection of two or more convex polytopes in $\RE^d$ are individually preprocessed (with knowledge of $\eps$). Given any two preprocessed polytopes, $A$ and $B$, the query determines whether $A$ and $B$ intersect approximately. In general, the query algorithm can be applied to any affine transformation of the preprocessed polytopes.

\begin{theorem} \label{thm:intersection}
Given a parameter $\eps > 0$ and two polytopes $A,B \subset \RE^d$ each of size $n$ (given either using a point or halfspace representation), we can independently preprocess each polytope into a data structure in order to answer $\eps$-approximate polytope intersection queries with query time $O(\plog)$, storage $O(1/\eps^{(d-1)/2})$, and preprocessing time
$O(n \log\inv\eps + 1/\eps^{(d-1)/2+\alpha})$, where $\alpha$ is an arbitrarily small positive constant.
\end{theorem}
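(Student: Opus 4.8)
\noindent\emph{Proof strategy.}
The plan is to replace each polytope by a compact surrogate of size $O(\inv{\eps^{(d-1)/2}})$ that is an $\eps'$-approximation for a suitable $\eps' = \Theta(\eps)$ and that supports polylogarithmic-time \emph{approximate extreme-point queries} — given a direction $v$, report a point of the surrogate whose inner product with $v$ is within an additive error $\Theta(\eps)\cdot\width_v(\cdot)$ of the support value — and then to answer a query by a simultaneous guided descent through the two surrogates. The first observation is that the notion of $\eps$-approximation is affine invariant: a nonsingular linear map $T$ sends the slab $\Pi_{T^{\top}v}(K)$ to the slab $\Pi_v(T(K))$ and commutes with central expansion of slabs, so $K \subseteq K' \subseteq K_\eps$ implies $T(K) \subseteq T(K') \subseteq (T(K))_\eps$. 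Consequently it suffices to preprocess each polytope in a canonical position, and an affine query transformation can be pushed onto the stored surrogate and onto auxiliary data (bounding ball, John ellipsoid) without affecting the validity of the approximation.

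\noindent\emph{Preprocessing.}
Given $A$ in either representation, I would first spend $O(n \log\inv\eps)$ time to compute a constant-factor approximation together with a John-type ellipsoid, apply the normalizing affine map so that $A$ lies between two concentric balls of radii $\Theta(1)$, and record the map. Then, using the machinery behind the optimal approximate polytope-membership data structure and its point/facet dual, build in $O(n\log\inv\eps + \inv{\eps^{(d-1)/2+\alpha}})$ time a surrogate polytope $\hat A$ with $O(\inv{\eps^{(d-1)/2}})$ facets satisfying $A \subseteq \hat A \subseteq A_{\eps/c}$ for a large constant $c$, organized into a linear-size hierarchical structure that answers approximate extreme-point queries — equivalently, approximate membership queries that also return a violated halfspace on a ``no'' answer — in $O(\plog)$ time. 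Do the same for $B$. Storage and preprocessing match the claimed bounds because the surrogate and its hierarchy have size proportional to that of an $\eps$-kernel, which is $\Theta(\inv{\eps^{(d-1)/2}})$.

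\noindent\emph{Query.}
After composing with the inverse of the transformation stored for $A$, it suffices to decide, for $A$ in canonical position and a transformed polytope $T(B)$, whether $A \cap T(B) \neq \emptyset$ (answer ``yes'') versus $A_\eps \cap T(B)_\eps = \emptyset$ (answer ``no''). Two convex bodies intersect iff the origin lies in their Minkowski difference, equivalently iff the convex, positively homogeneous function $f(v) = h_{\hat A}(v) + h_{T(\hat B)}(-v)$, with $h_K(v) = \max_{x \in K} \ang{v,x}$, is nonnegative for all $v \neq 0$, so the ``separating directions'' form an open convex cone $C = \{v \neq 0 : f(v) < 0\}$. Each evaluation of $f$ uses two approximate extreme-point queries and simultaneously yields a subgradient $g_v = a_v + T(b_v) \in \partial f(v)$ with $f(v) = \ang{v,g_v}$, so a negative value certifies a separating direction while a nonnegative value yields the central cut $C \subseteq \{w : \ang{g_v,w} < 0\}$. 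I would drive a simultaneous descent in the two hierarchies — a generalized Dobkin--Kirkpatrick / Barba--Langerman style search with these queries as its primitive — that, after a number of rounds bounded by the logarithm of the surrogates' size, either returns a point of $\hat A \cap T(\hat B)$, whence $\hat A \cap T(\hat B) \subseteq A_\eps \cap T(B)_\eps$ forces the answer ``yes'', or exhibits a direction $v$ with $-f(v) \geq \half\eps(\width_v(\hat A) + \width_v(T(\hat B)))$, whence $h_A(v)+h_{T(B)}(-v) \le f(v)$ together with $\hat A \subseteq A_{\eps/c}$ (and likewise for $B$) gives $A_\eps \cap T(B)_\eps = \emptyset$ and the answer ``no.'' Since each round costs $O(\plog)$ and the number of rounds is $O(\log\inv\eps)$, the query runs in $O(\plog)$ time, and this count depends only on the combinatorial size of the surrogates, not on the conditioning of $T$.

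\noindent\emph{Main obstacle.}
The delicate point is the design of the hierarchical structure: it must have size only $O(\inv{\eps^{(d-1)/2}})$ — in contrast with the $n^{\Theta(d^2)}$ storage needed for \emph{exact} intersection detection in $\RE^d$ — yet support a polylogarithmic-round search in every fixed dimension. This forces the surrogate to be, say, simplicial, so that its total face complexity is linear in its number of facets, and it requires exploiting that \emph{approximate} detection is strictly more forgiving than exact detection: a search that fails to locate a common point or a robust separator after $O(\log\inv\eps)$ rounds must be shown to leave only configurations in which either answer is admissible. Making this trade-off precise — and in particular arguing that the directional-width-sensitive notion of $\eps$-approximation is exactly what lets a combinatorial search absorb arbitrary affine transformations, including highly eccentric images where a naive geometric cutting-plane method would need a non-polylogarithmic number of steps — is the technical heart of the proof.
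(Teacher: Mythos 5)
There is a genuine gap, and you have actually flagged it yourself: the entire correctness and efficiency of the query algorithm is delegated to an unspecified ``simultaneous guided descent'' / Dobkin--Kirkpatrick / Barba--Langerman style search, and you concede in your final paragraph that making this work is ``the technical heart of the proof.'' As written, the argument gives no mechanism by which the search is guaranteed to terminate in $O(\log\inv\eps)$ rounds on a surrogate of size $O(1/\eps^{(d-1)/2})$ in arbitrary fixed dimension $d$. This is exactly where the difficulty lies: DK-style hierarchical intersection search has linear-size hierarchies only in $d \le 3$, and Barba--Langerman's exact method pays $n^{\Theta(d^2)}$ space precisely because no small such hierarchy is known in higher dimensions. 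You observe this tension, correctly guess that ``approximate is more forgiving than exact,'' but never convert that intuition into an algorithm or a bound on the number of rounds. The cutting-plane observation ($C \subseteq \{w : \ang{g_v,w} < 0\}$) is true but on its own gives an ellipsoid-method-style iteration count that depends on conditioning and need not be polylogarithmic without further ideas.

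The paper takes a different route that avoids building any combinatorial hierarchy for search. It first reduces the query to testing whether $O \in A \oplus (-B)$ (noting $A_\eps \oplus B_\eps = (A\oplus B)_\eps$, so the approximation notion is preserved), then fattens and normalizes $A\oplus(-B)$ so that it lies between two concentric balls of radius $\Theta(1)$ with center on the positive $x_d$ axis. Passing to the projective dual, membership of the origin becomes the question of whether the horizontal hyperplane lies below the lower envelope, which in turn reduces to \emph{minimizing a convex function with bounded slope over the box $[-\alpha,\alpha]^{d-1}$}, where each approximate function evaluation is one directional width query on $A\oplus(-B)$ (hence, by $\width_v(A\oplus(-B)) = \width_v(A)+\width_v(-B)$, two queries to the independently built width structures). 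The minimization is then done by nested ternary search over the $d-1$ coordinates, costing $O(\log^{d-1}\inv\eps)$ evaluations, each $O(\log^2\inv\eps)$ time. The crucial advantage over your plan is that boundedness of the domain and of the slope — which comes for free from the fattening step — immediately gives a termination bound of $O(\log\inv\eps)$ per dimension without any hierarchical structure on the surrogate, and the result is agnostic to the representation or combinatorics of the surrogate. Your preprocessing and storage claims are fine and essentially match the paper's, and your reduction via the support-function difference is morally the same as the Minkowski-difference reduction; what is missing is a concrete, provably polylogarithmic search primitive, and that is precisely the piece the paper supplies via dualization and multidimensional convex minimization.
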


The space is nearly optimal in the worst case because there is a lower bound of $\Omega(1/\eps^{(d-1)/2})$ on the worst-case bit complexity of representing an $\eps$-approximation of a polytope~\cite{AFM17a}.

\subsection{Minkowski Sum}

Given two convex bodies $A,B \subset \RE^d$, the \emph{Minkowski sum} $A \oplus B$ is defined as $\{p+q : p \in A,\; q \in B\}$ (see Figure~\ref{f:minkowski}(a)). Minkowski sums have found numerous applications in motion planning~\cite{ArS97,HSS17}, computer-aided design~\cite{VaM06}, computational biology~\cite{PaS05}, satellite layout~\cite{BDT97}, and image processing~\cite{KaR92}. Minkowski sums have also been well studied in the context of discrete and computational geometry~\cite{AFH02,AHKS14,FHW90,HCAHS95,Tiw08}. 

\begin{figure}[tbp]
  \centerline{\includegraphics[scale=0.7]{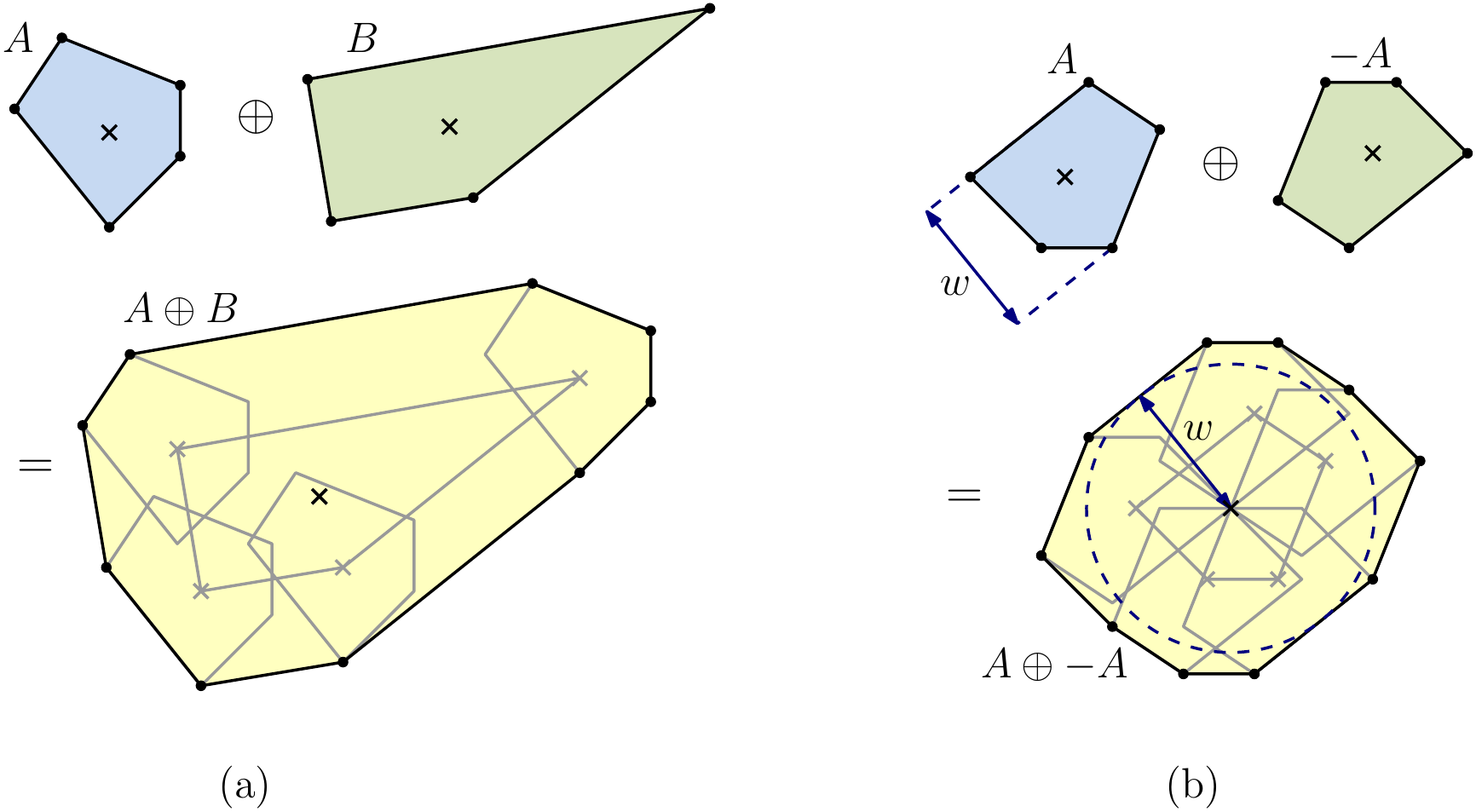}}
  \caption{Minkowski sum and its relationship to width.}
  \label{f:minkowski}
\end{figure}

It is well known that in dimension $d \ge 3$, the number of vertices in the Minkowski sum of two polytopes can grow as rapidly as the product of the number of vertices in the two polytopes \cite{ArS97}. This has led to the study of algorithms to compute approximations to Minkowski sums in $\RE^3$~\cite{AGHRS00,GXG08,VaM06}. In this paper, we show how to approximate the Minkowski sum of two convex polytopes in $\RE^d$ in near-optimal time.

\begin{theorem} \label{thm:minkowski}
Given a parameter $\eps > 0$ and two polytopes $A,B \subset \RE^d$ each of size $n$ (given either using a point or halfspace representation), it is possible to construct an $\eps$-approximation of $A \oplus B$ of size $O(1/\eps^{(d-1)/2})$ in $O(n \log\inv\eps + 1/\eps^{(d-1)/2+\alpha})$ time, where $\alpha$ is an arbitrarily small positive constant.
\end{theorem}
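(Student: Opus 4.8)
The plan is to reduce the construction to answering \emph{approximate membership queries} for the convex body $K := A \oplus B$ and then to turn such an oracle into an explicit succinct approximation, in the same spirit as one builds $\eps$-kernels.

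\textbf{An approximate membership oracle for $A\oplus B$.}
The key observation is that, for any point $p \in \RE^d$, we have $p \in A \oplus B$ if and only if $A \cap (p - B) \neq \emptyset$, where $p - B := \{p - q : q \in B\}$ is the image of $B$ under the isometry $x \mapsto p-x$. Hence, after independently preprocessing $A$ and $B$ via Theorem~\ref{thm:intersection} with approximation parameter $\eps' = \eps/c$ for a suitable constant $c>1$, a single intersection query on the preprocessed $A$ and the affine image $p-B$ returns, in $O(\plog)$ time, ``yes'' when $p \in A \oplus B$ and ``no'' when $p \notin (A\oplus B)_{\eps'}$. To justify the meaning of ``no'', recall that directional width is additive under Minkowski sums, $\width_v(A\oplus B) = \width_v(A)+\width_v(B)$, and that the minimal slab of $K_{\eps'}$ orthogonal to $v$ equals $\Pi_{v,\eps'}(K)$; together with convexity of $A_{\eps'}\oplus B_{\eps'}$ these yield $A_{\eps'}\oplus B_{\eps'} = (A\oplus B)_{\eps'}$. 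Since $K \mapsto K_{\eps'}$ commutes with invertible affine maps (directional widths transform consistently, so $T(K_{\eps'}) = (TK)_{\eps'}$), we get $A_{\eps'}\cap(p-B)_{\eps'} = \emptyset \iff p \notin (A\oplus B)_{\eps'}$, which is exactly the ``no'' guarantee of Theorem~\ref{thm:intersection}. We thus obtain an $\eps'$-approximate membership oracle for $K$ with $O(\plog)$ query time, at preprocessing cost $O(n\log\inv\eps + 1/\eps^{(d-1)/2+\alpha})$.

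\textbf{Normalization and construction.}
Because the relation ``$K\subseteq K'\subseteq K_\eps$'' is invariant under invertible affine maps, we may first place $K$ in a well-conditioned position: in $O(n)$ time we compute a constant-size $\eps_0$-approximation of $A$ and of $B$ for a fixed constant $\eps_0$ (a coarse kernel in the point representation, or the hull of the extreme points in $O(1)$ directions in the halfspace representation), take their Minkowski sum to get a constant-size $O(1)$-approximation $\hat K$ of $K$, and extract a John-type affine map $T$ so that $T(K)$ is sandwiched between two concentric balls whose radii differ by a factor $O(1)$. Working with the fat body $T(K)$ is what lets us cope with arbitrarily skinny Minkowski sums (such as $A\oplus(-A)$ for a flat $A$), the regime relevant to the width application. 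On $T(K)$ we run a Dudley-type construction: fix $O(1/\eps^{(d-1)/2})$ directions forming a $\Theta(\sqrt\eps)$-net on the sphere, and for each direction $v$ locate the supporting halfspace of $T(K)$ with outer normal $v$ by binary search along a ray (one oracle probe per step, $O(\log\inv\eps)$ steps), translating it outward by the $O(\eps')$ slack to guarantee $T(K)\subseteq P$. Let $P$ be the intersection of these halfspaces; standard estimates for Dudley's construction on a fat body give $T(K)\subseteq P$ and $P$ contained in the $O(\eps)$-Minkowski-expansion of $T(K)$, hence $T(K)\subseteq P\subseteq (T(K))_\eps$, so $K\subseteq T^{-1}(P)\subseteq K_\eps$. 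The running time is $O(1/\eps^{(d-1)/2})$ times $O(\log\inv\eps)\cdot O(\plog)$, i.e.\ $O(1/\eps^{(d-1)/2}\polylog\inv\eps) = O(1/\eps^{(d-1)/2+\alpha})$, and $T^{-1}(P)$ has $O(1/\eps^{(d-1)/2})$ facets (and, by polarity, vertices). Adding the preprocessing cost gives the claimed bound.

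\textbf{Anticipated difficulties.}
The main obstacle is not the high-level reduction but two finer points. First, one must certify that the data structure of Theorem~\ref{thm:intersection} --- built obliviously to the other polytope and queried on affine images of $B$ --- really yields a membership oracle whose one-sided error is governed by the direction-sensitive body $(A\oplus B)_{\eps'}$ and not some coarser inflation; this is where the identity $A_{\eps'}\oplus B_{\eps'}=(A\oplus B)_{\eps'}$ and affine-equivariance of $K\mapsto K_\eps$ do the work. Second, the skinny case forces the affine-invariant formulation together with the John normalization, so that Dudley's $O(1/\eps^{(d-1)/2})$-size bound is applicable. Finally, the ``approximate membership oracle $\Rightarrow$ worst-case-optimal explicit $\eps$-approximation in near-optimal time'' step is most cleanly imported, with minor adaptation, from the existing machinery on approximate convex hulls and approximate polytope membership rather than re-derived from scratch here.
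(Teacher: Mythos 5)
Your high-level plan matches the paper's (fatten $K=A\oplus B$, run a Dudley-type construction using only oracle access built from the individually-preprocessed $A$ and $B$), and your reduction of membership in $A\oplus B$ at a point $p$ to intersecting $A$ with the reflected translate $p-B$ is a clean and valid alternative to the paper's reduction (the paper instead tests intersection of $A\oplus B$ with small balls via Lemma~\ref{lem:basic}(c), so that each probe is a ball-vs-$K$ intersection handled entirely through directional width queries). However, your description of the Dudley step contains a genuine gap.

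You build the outer approximation by taking, for each direction $v$ in a \emph{uniform} $\Theta(\sqrt\eps)$-net on the sphere, the supporting halfspace of $T(K)$ with outer normal $v$. For a general convex polytope this does \emph{not} give an $\eps$-approximation with $O(1/\eps^{(d-1)/2})$ halfspaces. Take $T(K)$ to be a cube: the supporting halfspaces whose normals flank a facet normal by angle $\Theta(\sqrt\eps)$ both pass through opposite edges of that facet, and their intersection protrudes beyond the facet center by $\Theta(\sqrt\eps)$, not $\Theta(\eps)$; so the approximation error is $\Theta(\sqrt\eps)$ unless the net happens to contain the exact facet normals. Dudley's $O(\eps)$ bound is achieved by a different construction: one places a $\Theta(\sqrt\eps)$-net of points $w$ on a bounding sphere of $K$, finds for each $w$ the \emph{nearest point} $w'$ on $\partial K$, and takes the supporting halfspace at $w'$ with outer normal $(w-w')/\|w-w'\|$. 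This induces a highly non-uniform set of supporting normals, automatically concentrating them near vertices and ridges (where the Gauss image is expansive); a uniform net of normals lacks this adaptivity. The paper therefore reduces each sample to an approximate closest-point query (binary search on the radius of a ball centered at $w$, testing intersection with $K$ at each probe); alternatively one can use the dual Bronshteyn--Ivanov form.

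A secondary but related issue: even granting the normal-net scheme, ``binary search along a ray'' with a membership oracle computes the radial function $\rho_K(v)$ (the exit point of the ray $\{tv\}$ from $K$), not the support function $h_K(v)=\max_{x\in K}\langle v,x\rangle$; these differ in general, and the halfspace $\{\langle v,x\rangle\le\rho_K(v)\}$ would cut into $K$. To get $h_K(v)$ you would instead query the approximate directional width structure directly (which returns near-extreme witnesses), but that still leaves the uniform-net error problem above. The remaining parts of your argument --- the identity $A_{\eps'}\oplus B_{\eps'}=(A\oplus B)_{\eps'}$, affine equivariance of $K\mapsto K_\eps$, the John-type fattening from coarse constant-size approximations, and the accounting of the query and preprocessing costs --- are consistent with the paper's Lemmas~\ref{lem:combine}, \ref{lem:affine}, \ref{lem:fatten} and Lemma~\ref{lem:widthqueries}.
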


The output representation can be either point-based or halfspace-based, irrespective of the input representations. 

\subsection{Width}

Define the directional width of a set $S$ of $n$ points to be the directional width of $\conv(S)$. The \emph{width} of $S$ is the \emph{minimum} over all directional widths. The \emph{maximum} over all directional widths is equal to the diameter of $S$.
Both problems can be approximated using the $\eps$-kernel of $S$.
After successive improvements~\cite{AHV04,AMS92,ArC14,BaH01,Cha06}, algorithms to compute $\eps$-kernels and to $\eps$-approximate the diameter in roughly $O(n + 1/\eps^{d/2})$ time have been independently discovered by Chan~\cite{Cha17} and the authors~\cite{AFM17b}.
Somewhat surprisingly, these works offer no improvement to the running time to approximate the width~\cite{AHV04,Cha02b,Cha06,DGR97,YAP08}, which Chan~\cite{Cha17} posed as an open problem. The fastest known algorithms date from over a decade ago and take roughly $O(n + 1/\eps^{d-1})$ time~\cite{Cha02b,Cha06}.

Agarwal \etal~\cite{AGHRS00} showed that the width of a convex body $K$ is equal to the minimum distance from the origin to the boundary of the convex body $K \oplus (-K)$ (see Figure~\ref{f:minkowski}(b)). Using Theorem~\ref{thm:minkowski}, we can approximate the width by computing an $\eps$-approximation of $K \oplus (-K)$ represented as the intersection of halfspaces and then determining the closest point to the origin among all bounding hyperplanes. The following presents this result.

\begin{theorem} \label{thm:width}
Given a set $S$ of $n$ points in $\RE^d$ and an approximation parameter $\eps > 0$, it is possible to compute an $\eps$-approximation to the width of $S$ in $O(n \log\inv\eps + 1/\eps^{(d-1)/2+\alpha})$ time, where $\alpha$ is an arbitrarily small positive constant.
\end{theorem}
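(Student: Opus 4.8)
The plan is to reduce the width computation to an approximate Minkowski sum computation via the observation of Agarwal \etal~\cite{AGHRS00}: for a convex body $K$, $\width(K)$ equals the distance from the origin to $\bd(K \oplus (-K))$. First I would form $K = \conv(S)$ implicitly; note that we never need the full convex hull, since the point representation of $K$ is just $S$ itself (with possibly redundant points, which is harmless), and $-K$ has point representation $-S$. Then invoke Theorem~\ref{thm:minkowski} with the two polytopes $K$ and $-K$, each of size $n$, requesting the output of the $\eps$-approximation $P$ of $K \oplus (-K)$ in \emph{halfspace representation}. By Theorem~\ref{thm:minkowski} this takes $O(n \log\inv\eps + 1/\eps^{(d-1)/2+\alpha})$ time and yields $P$ described by $O(1/\eps^{(d-1)/2})$ halfspaces. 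Finally, for each bounding hyperplane of $P$, compute its distance to the origin in $O(1)$ time and return the minimum; since $P$ has $O(1/\eps^{(d-1)/2})$ facets, this last step is dominated by the earlier cost, so the total running time matches the claimed bound.

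The substantive step is verifying that this quantity is indeed an $\eps$-approximation to $\width(S)$, and here I must be careful about two subtleties. The first is that the paper's notion of $\eps$-approximation of a convex body ($K \oplus (-K) \subseteq P \subseteq (K \oplus (-K))_\eps$) is a \emph{directional-width} approximation, not a Hausdorff approximation, so I need to argue that the origin-to-boundary distance — which is itself a kind of ``inner radius'' in the direction realizing the width — is faithfully preserved. Since $0 \in K \oplus (-K)$ (as $K \cap K \neq \emptyset$) and the body $K \oplus (-K)$ is centrally symmetric, the distance from $0$ to its boundary in direction $u$ is exactly half the directional width $\width_u(K \oplus (-K)) = 2\,\width_u(K)$, i.e.\ it equals $\width_u(K)$; minimizing over $u$ gives $\width(K)$. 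For the approximation $P$, each slab $\Pi_{u}(P)$ satisfies $\width_u(P) \le (1+\eps)\width_u(K\oplus(-K))$, and $P$ is sandwiched between the two bodies, so the minimum over bounding hyperplanes of the origin distance lies between $\width(S)$ and $(1+\eps)\width(S)$. (If an inner $\eps$-approximation is used instead, the symmetric argument gives a $(1-\eps)$ factor; the theorem statement tolerates either, matching the remark after the definition of $\eps$-approximation.)

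The second subtlety, and what I expect to be the main obstacle, is the discrepancy between ``distance to $\bd P$'' and ``minimum distance from the origin to a bounding hyperplane of $P$.'' These coincide only when the origin lies inside $P$ and $P$ is convex — but $P$, being an intersection of halfspaces output by Theorem~\ref{thm:minkowski}, \emph{is} convex and \emph{does} contain $K \oplus (-K) \ni 0$, so the two notions agree and the naive hyperplane-distance computation is correct. I would spell this out explicitly, since the generic $\eps$-approximations produced elsewhere in the paper need not be convex; it is precisely the freedom in Theorem~\ref{thm:minkowski} to choose a halfspace (hence convex) output representation that makes the final min-over-facets step legitimate. Assembling these observations — the Agarwal \etal\ identity, central symmetry, the directional-width guarantee of Theorem~\ref{thm:minkowski}, and the convexity of the halfspace output — completes the proof, with all the algorithmic work delegated to Theorem~\ref{thm:minkowski}.
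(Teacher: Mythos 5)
Your proposal follows essentially the same route as the paper: invoke the Agarwal \emph{et al.}\ identity $\width(K) = \mathrm{dist}(O, \bd(K \oplus (-K)))$, apply Theorem~\ref{thm:minkowski} to $K$ and $-K$ (via $S$ and $-S$) to obtain a halfspace-based $\eps$-approximation, and return the minimum origin-to-hyperplane distance. The paper states this in three sentences and leaves the verification implicit; your added arguments (central symmetry forcing $(K\oplus(-K))_\eps = (1+\eps)(K\oplus(-K))$, hence the width is preserved to a $(1+\eps)$ factor, and convexity of the halfspace output making the min-over-facets distance equal the distance to the boundary) are correct and fill in exactly the details the paper glosses over.
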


\subsection{Techniques}

Our algorithms and data structure are based on a data structure defined by a hierarchy of Macbeath regions~\cite{AFM17b,AFM17a}, which answers approximate directional width queries in polylogarithmic time. First, we show how to use this data structure as a black box to answer approximate polytope intersection queries by transforming the problem to a dual setting and performing a multidimensional convex minimization. Next, we show how to use approximate polytope intersection queries to compute $\eps$-approximations of the Minkowski sum. The approximation to the width follows directly.

Since we only access the input polytopes through a data structure for approximate directional width queries, our results apply in much more general settings. For example, we could answer in polylogarithmic time whether the Minkowski sum of two polytopes (preprocessed independently) approximately intersects a third polytope. Our techniques are also amenable to other polytope operations such as intersection and convex hull of the union, as long as the model of approximation is defined accordingly.

The preprocessing time of the approximate directional width data structure we use is $O(n \log\inv\eps + 1/\eps^{(d-1)/2+\alpha})$, for arbitrarily small $\alpha > 0$. If this preprocessing time is reduced in the future, the complexity of our algorithms becomes equal to the preprocessing time plus $O((1 / \eps^{(d-1)/2}) \plog)$.

\section{Preliminaries} \label{s:preliminaries}

In this section we present a number of results, which will be used throughout the paper. The first provides three basic properties of Minkowski sums. The proof can be found in standard sources on Minkowski sums (see, e.g.,~\cite{Sch93}).

\begin{lemma} \label{lem:basic}
Let $A,B \subset \RE^d$ be two (possibly infinite) sets of points. Then:
\begin{enumerate}
 \item[$(a)$\hspace{-3pt}] $A \cap B \neq \emptyset$ if and only if $O \in A \oplus (-B)$, where $O$ is the origin.

 \item[$(b)$\hspace{-3pt}] $\conv(A \oplus B) = \conv(A) \oplus \conv(B)$.

 \item[$(c)$\hspace{-3pt}] For all nonzero vectors $v$, $\width_v(A \oplus B) = \width_v(A) + \width_v(B)$.
\end{enumerate}
\end{lemma}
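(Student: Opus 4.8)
The plan is to verify each of the three parts directly from the definition of the Minkowski sum $A \oplus B = \{p+q : p \in A,\; q \in B\}$, using only elementary set-theoretic and convexity arguments. For part $(a)$, I would argue as follows: the origin $O$ lies in $A \oplus (-B)$ if and only if there exist $p \in A$ and $q \in B$ with $p + (-q) = O$, i.e.\ $p = q$; this common point is precisely a point of $A \cap B$, so the two conditions are equivalent. This is a one-line biconditional chase and requires no real work.

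For part $(b)$, I would prove the two inclusions separately. The inclusion $\conv(A \oplus B) \subseteq \conv(A) \oplus \conv(B)$ follows because $\conv(A) \oplus \conv(B)$ is a convex set (the Minkowski sum of two convex sets is convex, since it is the image of the convex set $\conv(A) \times \conv(B)$ under the linear map $(x,y) \mapsto x+y$) that contains $A \oplus B$, hence it contains the convex hull of $A \oplus B$. For the reverse inclusion, take a point $\sum_i \lambda_i a_i + \sum_j \mu_j b_j$ with $a_i \in A$, $b_j \in B$, and $\sum_i \lambda_i = \sum_j \mu_j = 1$, $\lambda_i,\mu_j \ge 0$; rewriting this as $\sum_{i,j} \lambda_i \mu_j (a_i + b_j)$ and checking that the coefficients $\lambda_i \mu_j$ are nonnegative and sum to one exhibits it as a convex combination of points of $A \oplus B$. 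Hence $\conv(A) \oplus \conv(B) \subseteq \conv(A \oplus B)$.

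For part $(c)$, I would use the support function characterization of directional width. For a nonzero vector $v$ and a bounded set $K$, let $h_K(v) = \sup_{x \in K} \langle v, x\rangle$ denote the support function; then $\width_v(K) = (h_K(v) + h_K(-v)) / \|v\|$. Since $h_{A \oplus B}(v) = \sup_{p \in A,\, q \in B} \langle v, p+q\rangle = \sup_{p \in A}\langle v,p\rangle + \sup_{q \in B}\langle v,q\rangle = h_A(v) + h_B(v)$, the additivity of directional width over Minkowski sums follows immediately by adding the identity for $v$ and for $-v$ and dividing by $\|v\|$.

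Since all three parts reduce to short, standard manipulations, I do not anticipate any genuine obstacle; the only point requiring a modicum of care is the algebraic bookkeeping in the reverse inclusion of part $(b)$ (verifying that the product coefficients $\lambda_i\mu_j$ form a valid convex combination) and, for part $(c)$, noting that the suprema defining the support functions are attained (or at least finite and additive) — which is automatic when $A$ and $B$ are bounded, and in the unbounded case the identity still holds with values in $(-\infty,+\infty]$ as long as one interprets the widths consistently. As the excerpt notes, a fully detailed treatment can be found in standard references such as~\cite{Sch93}.
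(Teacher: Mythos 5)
Your proposal is correct in all three parts, and the arguments you give are the standard ones. Note, however, that the paper itself does not prove Lemma~\ref{lem:basic}: it simply defers to a textbook reference (Schneider), so there is no ``paper proof'' against which to compare. Your write-up supplies exactly the kind of direct-from-definition argument that citation is standing in for. A few small observations: in $(a)$ the biconditional chase is exactly right; in $(b)$ the key computation $\sum_i \lambda_i a_i + \sum_j \mu_j b_j = \sum_{i,j}\lambda_i\mu_j(a_i+b_j)$ together with $\sum_{i,j}\lambda_i\mu_j = 1$ is the standard trick and you execute it correctly; in $(c)$ the support-function identity $h_{A\oplus B} = h_A + h_B$ is the right tool, and $\width_v(K) = (h_K(v)+h_K(-v))/\|v\|$ matches the paper's slab-based definition of directional width. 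Your caveat about unbounded sets is fair but harmless here, since the paper applies the lemma only to polytopes.
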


Next, we recall a recent result of ours on answering directional width queries approximately~\cite{AFM17b}, which we will use as a black box later in this paper. Given a set $S$ of $n$ points in a constant dimension $d$ and an approximation parameter $\eps>0$, the answer to the \emph{approximate directional width query} for a nonzero query vector $v$ consists of a pair of points $p,q \in S$ such that $\width_v(\{p,q\}) \geq (1-\eps) \; \width_v(S)$.

\begin{lemma} \label{lem:widthqueries}
Given a set $S$ of $n$ points in $\RE^d$ and an approximation parameter $\eps > 0$, there is a data structure that can answer $\eps$-approximate directional width queries with
query time $O( \log^2 \inv{\eps} )$,
space $O(1/\eps^{(d-1)/2})$, and
preprocessing time $O( n \log \inv \eps + 1/\eps^{(d-1)/2+\alpha})$.
\end{lemma}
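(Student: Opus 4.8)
The plan is to build the structure in two layers. The first layer removes the dependence on~$n$: using a known near-linear-time kernel construction~\cite{AHV04,Cha06,AFM17b}, I would compute in $O(n\log\inv\eps + 1/\eps^{(d-1)/2+\alpha})$ time an $(\eps/3)$-kernel $Q\subseteq S$ with $|Q| = O(1/\eps^{(d-1)/2})$. Since $\conv(Q)$ is an inner $(\eps/3)$-approximation of $\conv(S)$, we have $\width_v(Q)\ge(1-\eps/3)\width_v(S)$ for every~$v$, and $Q\subseteq S$; hence it suffices to answer $(\eps/3)$-approximate directional width queries on $Q$, since the composed relative error is at most $\eps$ and the returned pair automatically lies in $S$. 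From this point the effective input size is $O(1/\eps^{(d-1)/2})$, and all remaining work must fit in $O(1/\eps^{(d-1)/2+\alpha})$ preprocessing and $O(1/\eps^{(d-1)/2})$ space.

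Second, assuming (WLOG) that the query vector~$v$ is a unit vector, I would split the query into two \emph{approximate extreme point} queries on $\conv(Q)$, one in direction~$v$ and one in direction~$-v$: a point $p\in Q$ with $\ang{p,v}$ within an additive $(\eps/6)\width_v(Q)$ of $h^+(v):=\max_{x\in Q}\ang{x,v}$, together with a point $q\in Q$ analogously approximating $h^-(v):=\min_{x\in Q}\ang{x,v}$, satisfies $\width_v(\{p,q\})\ge(1-\eps/3)\width_v(Q)$. To answer an approximate extreme point query I would pass to the polar dual. First apply a fixed affine map placing $\conv(Q)$ in a fat, origin-centered position --- sandwiched between concentric balls whose radii differ by a factor depending only on~$d$, computable in time linear in~$|Q|$ and hence subsumed by the other terms, with directional widths transforming in a controlled way so the final answer maps back cleanly. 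Then the origin lies deep inside $\conv(Q)$, so $h^+(v)>0$ for every~$v$ and the dual polytope $P^\ast = \{y : \ang{y,x}\le1 \text{ for all } x\in Q\}$ is bounded, with the ray from the origin in direction~$v$ piercing $\bd P^\ast$ at parameter $1/h^+(v)$; the fatness forces this parameter into an interval of constant ratio.

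Thus an approximate extreme point query reduces to approximate ray shooting in $P^\ast$, which I would answer by binary search over the ray parameter, each step an $\eps'$-approximate polytope membership test on $P^\ast$ for a suitable $\eps'=\Theta(\eps)$. For the membership tests I would invoke the Macbeath-region hierarchy of~\cite{AFM17a,AFM17b}: on a polytope of size $m = O(1/\eps^{(d-1)/2})$ it can be built in $O(m + 1/\eps^{(d-1)/2+\alpha}) = O(1/\eps^{(d-1)/2+\alpha})$ time and $O(1/\eps^{(d-1)/2})$ space and answers $\eps'$-approximate membership in $O(\log\inv\eps)$ time; I would augment it so that the cell reached by a query also reports a representative vertex of $\conv(Q)$ near the corresponding facet, which is the witness point to return (an actual point of~$Q$ via the stored correspondence). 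Since the ray parameter ranges over an interval of constant ratio and is sought to relative precision $\Theta(\eps)$, the binary search uses $O(\log\inv\eps)$ steps, giving query time $O(\log\inv\eps)\cdot O(\log\inv\eps) = O(\log^2\inv\eps)$; combining with the kernel layer yields space $O(1/\eps^{(d-1)/2})$ and preprocessing $O(n\log\inv\eps + 1/\eps^{(d-1)/2+\alpha})$, as claimed.

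The step I expect to be the main obstacle is the error accounting in the middle: one must verify that the additive slack of an approximate membership query on the \emph{scaled} dual polytope, together with the truncation error of the binary search, translates into a relative-$O(\eps)$ error in $h^+(v)$ for \emph{every} direction~$v$ --- including directions in which $\conv(Q)$ is extremely thin --- and that this composes cleanly with the $(\eps/3)$ kernel error. This is exactly where the fat, origin-centered placement is essential: it guarantees that $\width_v(Q)$ is within a $d^{O(1)}$ factor of the circumradius of $\conv(Q)$ for every~$v$, which is what converts ``additive error in the dual'' into ``relative error in the directional width''; without it the anisotropy of $\conv(Q)$ would break the bound. A secondary technical point, which I would simply inherit from~\cite{AFM17b}, is that building the Macbeath hierarchy within $1/\eps^{(d-1)/2+\alpha}$ rather than a larger polynomial in $1/\eps$ requires their recursive ``split'' construction, and it is this that forces the arbitrarily small constant $\alpha>0$ into the bound.
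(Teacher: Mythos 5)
The paper gives no proof of this lemma --- it is stated as a quoted result from~\cite{AFM17b} and invoked as a black box --- so there is no in-paper argument to compare against. Your reconstruction is sound and matches the architecture of the cited reference: reduce $n$ to $O(1/\eps^{(d-1)/2})$ via a kernel, dualize an extreme-point query to ray shooting in the polar, and resolve the ray by $O(\log\inv\eps)$ rounds of $O(\log\inv\eps)$-time approximate membership probes in a Macbeath-region hierarchy, giving the claimed $O(\log^2\inv\eps)$ query time; the error accounting you flag (fattening so that additive dual error corresponds to relative primal error uniformly over directions) is indeed the delicate point, and you are also right that the $\alpha$ in the preprocessing bound is inherited from the recursive hierarchy construction.

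Two steps that a full write-up would have to pin down rather than defer. First, witness recovery: turning a yes/no membership structure into one that returns an actual pair $p,q\in S$ by storing a representative vertex with each Macbeath cell is the crux of the construction, and one must argue that the representative of the deepest ``inside'' cell along the ray has $\ang{\cdot\,,v}$ within additive $\Theta(\eps)$ of the true extremum --- this is a genuine geometric property of the hierarchy, not a bookkeeping convenience. Second, the polar $P^\ast$ must itself be shown fat after your centering, since the $\eps'$-membership tolerance on $P^\ast$ is what controls the relative precision of $1/h^+(v)$; the fact that polarity about a deep center preserves fatness up to $d^{O(1)}$ should be stated. Neither is a flaw in the plan; both are exactly the lemmas of~\cite{AFM17a,AFM17b} your sketch implicitly borrows.
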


\subsection{Fattening} \label{ss:fattening}
 
Existing algorithms and data structures for convex approximation often assume that the bodies have been fattened through an appropriate affine transformation. In the context of multiple bodies, this is complicated by the fact that different fattening transformations may be needed for the two bodies or their Minkowski sum. In this section we explore this issue.

Consider a convex body $K$ in $d$-dimensional space $\RE^d$. Given a parameter $0 < \gamma \le 1$, we say that $K$ is \emph{$\gamma$-fat} if there exist concentric Euclidean balls $B$ and $B'$, such that $B \subseteq K \subseteq B'$, and $\radius(B) / \radius(B') \ge \gamma$. We say that $K$ is \emph{fat} if it is $\gamma$-fat for a constant $\gamma$ (possibly depending on $d$, but not on $\eps$ or $K$). For a centrally symmetric convex body $C$, the body obtained by scaling $C$ about its center by a factor of $\lambda$ is called the $\lambda$-expansion of $C$.

Let $K$ be a convex body. We say that a convex body $C$ is a \emph{$\lambda$-sandwiching} body for $K$ if $C$ is centrally symmetric and $C \subseteq K \subseteq C'$, where $C'$ is a $\lambda$-expansion of $C$. John~\cite{Joh48} proved tight bounds for the constant $\lambda$ of a $\lambda$-sandwiching ellipsoid. This ellipsoid is referred to as the \emph{John ellipsoid}.

\begin{lemma} \label{lem:John}
For every convex body $K$ in $\RE^d$, there exists a $d$-sandwiching ellipsoid. Furthermore, if $K$ is centrally symmetric, there exists a $\sqrt{d}$-sandwiching ellipsoid.
\end{lemma}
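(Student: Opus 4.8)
The statement is John's theorem, so the plan is the classical variational argument. First I would reduce to a normalized situation: having a $\lambda$-sandwiching ellipsoid is invariant under invertible affine maps, since these carry ellipsoids to ellipsoids, preserve central symmetry of a body about its centre, and preserve inclusions together with the $\lambda$-expansion relation. By a standard compactness argument — parametrize ellipsoids as $E(c,M) = \{x : (x-c)^\top M (x-c) \le 1\}$ with $M$ positive definite, observe that those contained in the convex body $K$ have $c \in K$, $\lambda_{\min}(M)$ bounded below (since $E$ contains a segment of length $2/\sqrt{\lambda_{\min}(M)}$, at most $\diam K$), and $\det M$ bounded below by that of the inscribed ball, so after also imposing $\det M \le$ that value the parameters lie in a compact set on which $\vol(E) \propto (\det M)^{-1/2}$ is continuous — there is an ellipsoid $E \subseteq K$ of maximum volume. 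Applying a suitable affine map, assume $E = B$, the closed unit ball about the origin $O$. In the centrally symmetric case, translate first so that $K = -K$ about $O$ and run the same argument restricted to origin-centred ellipsoids; after a linear map the maximal such ellipsoid is $B$. I will then show $K \subseteq dB$ in general and $K \subseteq \sqrt d\,B$ when $K = -K$, which is exactly the lemma with $C = B$ and $C'$ the $d$- (resp. $\sqrt d$-) expansion of $B$.

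The core is a single claim proved by contradiction: if some $p \in K$ has $\|p\| > d$ (resp. $> \sqrt d$), there is an ellipsoid inside $K$ of volume exceeding $\vol(B)$. Rotate so that $p = (\tau, 0, \dots, 0)$. By convexity, $K' := \conv(B \cup \{p\})$ (resp. $\conv(B \cup \{p,-p\})$) lies in $K$, so it suffices to find such an ellipsoid inside $K'$. I would look among ellipsoids of revolution about the $x_1$-axis, $E = \{x : (x_1-c)^2/a^2 + (x_2^2+\cdots+x_d^2)/b^2 \le 1\}$ (with $c = 0$ forced in the symmetric case), whose volume is $\vol(B)\cdot a\,b^{d-1}$. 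The region $K'$ is bounded by a piece of the unit sphere together with the cone(s) through $p$ (resp. $\pm p$) tangent to $B$, so the inclusion $E \subseteq K'$ unwinds, via support functions, into a small system of inequalities in $(a,b,c)$. Perturbing from $(a,b,c) = (1,1,0)$ and keeping the binding constraints tight, a short computation shows one can increase $a\,b^{d-1}$ to first order exactly when $\tau > d$ (resp. $\tau > \sqrt d$): in the symmetric case the extremal choice is $b = 1 - \frac{a-1}{\tau^2-1} + o(a-1)$, giving $a\,b^{d-1} = 1 + (a-1)\bigl(1 - \tfrac{d-1}{\tau^2-1}\bigr) + o(a-1)$, which exceeds $1$ precisely for $\tau^2 > d$; the general case is analogous, with the extra freedom in $c$ used to balance the spherical-cap constraint against the tangent-cone constraint (one checks both reduce to $a - 1 \le (\tau-1)(1-b)$, making $a\,b^{d-1} - 1$ of order $(\tau - d)(1-b)$). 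This contradicts the maximality of $B$ and proves the claim, hence the lemma.

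The main obstacle is precisely this last computation: identifying which supporting hyperplanes of $K'$ are binding for the perturbed ellipsoid, verifying that the perturbation can be chosen so that the cap constraint and the tangent-cone constraint are simultaneously satisfiable while $a\,b^{d-1}$ strictly increases, and pinning the threshold at $\tau = d$ (resp. $\sqrt d$). Everything reduces to planar geometry in the $(x_1,r)$ half-plane with $r = \sqrt{x_2^2+\cdots+x_d^2}$ — an ellipse inside the region bounded by the unit circle and the tangent line(s) from $(\tau,0)$ — followed by one-variable calculus, so it is elementary, but it is where all the content of the theorem lies. The surrounding steps (affine normalization, compactness for existence, the contradiction skeleton) are routine.
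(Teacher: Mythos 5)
The paper does not prove Lemma~\ref{lem:John}: it is stated as John's theorem with a citation to~\cite{Joh48}, so there is no argument in the text against which to compare. Your sketch is a correct self-contained derivation via the variational route. The existence/normalization step is standard, and your first-order bookkeeping checks out. In the centrally symmetric case, writing $\delta = a-1$, $\mu = 1-b$, both the constraint that the ellipse stay inside the spherical arc for $|x_1| \le 1/\tau$ and the constraint that it stay inside the tangent cone linearize to $\delta \le (\tau^2-1)\mu$, so the volume gain $\delta - (d-1)\mu$ is of order $\mu(\tau^2 - d)$, positive exactly when $\tau > \sqrt d$. In the general case the two binding support-function constraints are at the pole $u_1 = -1$, giving $c \ge \delta$, and at the tangency direction $u_1 = 1/\tau$, giving $c\tau + \delta \le (\tau^2-1)\mu$; eliminating $c$ yields $\delta \le (\tau-1)\mu$ and a gain of order $\mu(\tau - d)$, as you say (your phrase ``both reduce to'' is a slight abuse --- it is their combination that reduces). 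One small point worth making explicit: the first-order improvement must be upgraded to an actual strict increase for some concrete small perturbation, which is immediate here since all quantities are smooth in $(a,b,c)$, but it is the step a careful write-up should not skip. This variational argument differs from the now-standard textbook proof (Ball's exposition of John's theorem), which characterizes the maximal inscribed ellipsoid by contact points $u_i$ and weights $c_i$ satisfying $\sum_i c_i u_i u_i^\top = I$ and $\sum_i c_i u_i = 0$ and then derives the containment $K \subseteq dB$ directly from these identities; your route trades that decomposition for the explicit planar support-function computation. Both are legitimate, and since the paper itself treats the lemma as citable background, either would serve.
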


It is an immediate consequence of this lemma that for any convex body $K$ there exists an affine transformation $T$ such that $T(K)$ is $(1/d)$-fat. Any affine transformation that maps the John ellipsoid into a Euclidean ball will do. The following lemma generalizes this to hyperrectangles (see also Barequet and Har-Peled~\cite{BaH01}).

\begin{lemma} \label{lem:rect} 
For every convex body $K$ in $\RE^d$, there exists a $(d^{3/2})$-sandwiching hyperrectangle.
\end{lemma}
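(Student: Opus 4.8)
The plan is to reduce to the John ellipsoid guaranteed by Lemma~\ref{lem:John} and then observe that an ellipsoid is itself well sandwiched by a suitably chosen box. First I would invoke Lemma~\ref{lem:John} to obtain a $d$-sandwiching ellipsoid $E$ for $K$, so that $E \subseteq K \subseteq \widehat{E}$, where $\widehat{E}$ is the $d$-expansion of $E$. Let $o$ be the center of $E$, let $u_1, \dots, u_d$ be an orthonormal frame aligned with its principal axes, and let $a_1, \dots, a_d$ be the corresponding semi-axis lengths.

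Next I would take the candidate hyperrectangle $C$ to be the box centered at $o$ whose half-width along $u_i$ equals $a_i / \sqrt{d}$. Two elementary facts about this box carry the argument. On one hand, every vertex of $C$ has the form $o + \sum_{i} \pm (a_i/\sqrt{d})\, u_i$, and since $\sum_i (a_i/\sqrt{d})^2 / a_i^2 = \sum_i 1/d = 1$, each such vertex lies on $\bd E$; hence $C \subseteq E$. On the other hand, $E$ is contained in its bounding box with respect to the frame $u_1,\dots,u_d$, namely the box centered at $o$ with half-widths $a_i$, and this bounding box is exactly the $\sqrt{d}$-expansion of $C$. Therefore $C \subseteq E \subseteq \overline{C}$, where $\overline{C}$ is the $\sqrt{d}$-expansion of $C$; that is, $C$ is a $\sqrt{d}$-sandwiching box for $E$.

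Finally I would compose the two containments. Since $C$ and $E$ share the center $o$, so do all of their central expansions, and performing a $\lambda$-expansion followed by a $\mu$-expansion about this common center yields a $(\lambda\mu)$-expansion. From $E \subseteq \overline{C}$ it follows that $\widehat{E}$, the $d$-expansion of $E$, is contained in the $d$-expansion of $\overline{C}$, which is the $d^{3/2}$-expansion $C'$ of $C$. Chaining everything gives $C \subseteq E \subseteq K \subseteq \widehat{E} \subseteq C'$, which is precisely the assertion that $C$ is a $d^{3/2}$-sandwiching hyperrectangle for $K$ (the box $C$ being oriented along the axes of the John ellipsoid rather than the coordinate axes).

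The only real content is the one-line computation showing that the largest inscribed box and the smallest circumscribed box of an ellipsoid, both taken in the ellipsoid's own frame, differ by a factor of exactly $\sqrt{d}$; everything else is bookkeeping about composing central expansions. I do not anticipate any genuine obstacle: the bound $d^{3/2} = d \cdot \sqrt{d}$ is simply the product of the John sandwiching factor for an arbitrary convex body and the ellipsoid-to-box factor.
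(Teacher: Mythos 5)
Your proof is correct and follows essentially the same route as the paper: invoke the John ellipsoid from Lemma~\ref{lem:John}, sandwich that ellipsoid between a box and its $\sqrt{d}$-expansion, and compose the two central expansions about the common center to get the factor $d^{3/2}$. The only difference is that the paper dismisses the ellipsoid-to-box step as ``elementary geometry,'' whereas you spell out the explicit inscribed box with half-widths $a_i/\sqrt{d}$ and verify its vertices lie on $\bd E$.
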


\begin{proof}
Let $E$ denote the $d$-sandwiching ellipsoid for $K$, described in Lemma~\ref{lem:John}. By elementary geometry, there exists a $\sqrt{d}$-sandwiching hyperrectangle $R$ for $E$. We claim that $R$ is a $(d^{3/2})$-sandwiching hyperrectangle for $K$. To prove this claim, observe that $R \subseteq E \subseteq R'$ and $E \subseteq K \subseteq E'$, where $R'$ is the $\sqrt{d}$-expansion of $R$ and $E'$ is the $d$-expansion of $E$. Letting $R''$ denote the $d$-expansion of $R'$, it is easy to see that $E' \subseteq R''$. It follows that $R \subseteq E \subseteq K \subseteq E' \subseteq R''$. Since $R''$ is the $d$-expansion of $R'$ and $R'$ is the $\sqrt{d}$-expansion of $R$, it follows that $R''$ is the $(d^{3/2})$-expansion of $R$. This completes the proof.
\end{proof}

Next, let us consider fattening in the context of multiple bodies. The next two lemmas follow from elementary geometry and properties of Minkowski sums.

\begin{lemma} \label{lem:combine}
Let $C_1$ and $C_2$ be $\lambda$-sandwiching bodies for $K_1$ and $K_2$, respectively. Then $C_1 \oplus C_2$ is a $\lambda$-sandwiching body for $K_1 \oplus K_2$.
\end{lemma}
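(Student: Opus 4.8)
The plan is to check directly the two requirements in the definition of a $\lambda$-sandwiching body, applied to the pair $(C_1 \oplus C_2,\; K_1 \oplus K_2)$: first that $C_1 \oplus C_2$ is centrally symmetric, and second that $C_1 \oplus C_2 \subseteq K_1 \oplus K_2 \subseteq (C_1 \oplus C_2)'$, where the prime denotes the $\lambda$-expansion about the center. The one structural fact driving everything is that translation and uniform scaling by a scalar both commute with the Minkowski sum; once this is in hand, the lemma reduces to bookkeeping.

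The first step is to locate the center of $C_1 \oplus C_2$. If $c_i$ denotes the center of $C_i$, I would show that $C_1 \oplus C_2$ is symmetric about $c_1 + c_2$ by translating each $c_i$ to the origin and observing that a Minkowski sum of two origin-symmetric sets is again origin-symmetric, since $-\big((C_1 - c_1) \oplus (C_2 - c_2)\big) = (-(C_1-c_1)) \oplus (-(C_2-c_2)) = (C_1-c_1) \oplus (C_2-c_2)$.

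The second step is the left inclusion $C_1 \oplus C_2 \subseteq K_1 \oplus K_2$, which follows immediately from monotonicity of $\oplus$ under inclusion applied to $C_i \subseteq K_i$. The third step, the right inclusion, is where the scaling identity enters: monotonicity applied to $K_i \subseteq C_i'$ gives $K_1 \oplus K_2 \subseteq C_1' \oplus C_2'$, so it remains to identify $C_1' \oplus C_2'$ with $(C_1 \oplus C_2)'$. Writing $C_i' = c_i + \lambda(C_i - c_i)$ and pushing the scalar $\lambda$ and the translations through the Minkowski sum yields $C_1' \oplus C_2' = (c_1 + c_2) + \lambda\big((C_1 \oplus C_2) - (c_1+c_2)\big)$, which is exactly the $\lambda$-expansion of $C_1 \oplus C_2$ about $c_1+c_2$. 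Chaining the two inclusions completes the argument.

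The main (and essentially only) obstacle is keeping the centers straight: one must confirm that the natural center of $C_1 \oplus C_2$ really is $c_1 + c_2$ and that the $\lambda$-expansion in the conclusion is taken about that point, so that the identity $C_1' \oplus C_2' = (C_1 \oplus C_2)'$ holds exactly and not merely up to a translation. Everything else is elementary manipulation of Minkowski sums.
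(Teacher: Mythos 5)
Your proof is correct. The paper gives no explicit proof of Lemma~\ref{lem:combine}, stating only that it ``follows from elementary geometry and properties of Minkowski sums,'' and your argument is precisely the expected fill-in: central symmetry of $C_1 \oplus C_2$ about $c_1 + c_2$, monotonicity of $\oplus$ for the inner inclusion, and the commutation of translation and positive scalar dilation with $\oplus$ to identify $C_1' \oplus C_2'$ with $(C_1 \oplus C_2)'$. You are also right that the only place care is needed is keeping track of the center, and you handle that cleanly.
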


\begin{lemma} \label{lem:affine}
Let $K$ be a convex body. Given a $\lambda$-sandwiching polytope for $K$ of constant complexity, we can compute a $\gamma$-fattening affine transformation $T$ for $K$ in constant time, where $\gamma = 1/(\lambda \sqrt{d})$.
\end{lemma}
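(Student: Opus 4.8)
The plan is to exhibit the affine transformation explicitly from the given $\lambda$-sandwiching polytope and then verify the fatness bound. Let $C$ be the given centrally symmetric $\lambda$-sandwiching polytope for $K$, so that $C \subseteq K \subseteq C'$, where $C'$ is the $\lambda$-expansion of $C$ about its center. First I would translate so that the center of $C$ is at the origin; this costs constant time. Now $C$ is a centrally symmetric convex polytope of constant complexity, so in constant time we can compute its John ellipsoid $E$ by Lemma~\ref{lem:John}: since $C$ is centrally symmetric, $E$ is $\sqrt d$-sandwiching, i.e.\ (after centering) $E \subseteq C \subseteq \sqrt d\, E$. Let $T$ be the (linear) affine map that carries $E$ to the unit Euclidean ball $B$; this is computed in constant time from the quadratic form defining $E$.

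Next I would track the inclusions through $T$. Applying $T$ to $E \subseteq C \subseteq K \subseteq C' \subseteq \sqrt d\, E'$, where $E' = \lambda E$ is the $\lambda$-expansion of $E$ (note $C' = \lambda C$ and $C \supseteq (1/\sqrt d) \cdot \sqrt d E \supseteq \dots$; more directly $C \subseteq \sqrt d E$ gives $C' = \lambda C \subseteq \lambda \sqrt d E$), we obtain
\[
B \;=\; T(E) \;\subseteq\; T(C) \;\subseteq\; T(K) \;\subseteq\; T(C') \;\subseteq\; \lambda \sqrt d\, B.
\]
Hence $B \subseteq T(K) \subseteq \lambda\sqrt d\, B$, with $B$ and $\lambda \sqrt d\, B$ concentric balls of radius ratio $1/(\lambda \sqrt d)$. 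By the definition of fatness in Section~\ref{ss:fattening}, $T(K)$ is $\gamma$-fat with $\gamma = 1/(\lambda\sqrt d)$, as claimed. All steps — centering, computing the John ellipsoid of a constant-complexity centrally symmetric polytope, and inverting a $d\times d$ linear map — take constant time since $d$ is a fixed constant.

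The only slightly delicate point is the chain of expansions: one must be careful that "$\lambda$-expansion of $C$" and "$\lambda$-expansion of $E$" interact correctly, i.e.\ that the inclusion $C \subseteq \sqrt d\, E$ upgrades to $C' = \lambda C \subseteq \lambda \sqrt d\, E = \sqrt d\,E'$ and that $T$, being linear, commutes with central scaling about the origin. These are routine once everything is centered at the origin, so I do not expect a genuine obstacle; the main content is simply assembling Lemma~\ref{lem:John} with an explicit linear map and bookkeeping the sandwiching factors multiplicatively.
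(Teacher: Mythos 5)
Your proof is correct and follows essentially the same route as the paper: invoke Lemma~\ref{lem:John} to get a $\sqrt d$-sandwiching ellipsoid $E$ for the centrally symmetric polytope $C$, take $T$ to be the map sending $E$ to a ball, and chain the inclusions $E \subseteq C \subseteq K \subseteq C'$ with $C' \subseteq \lambda\sqrt d\, E$ to conclude $T(K)$ lies between concentric balls of radius ratio $1/(\lambda\sqrt d)$. The only cosmetic difference is bookkeeping: the paper names the intermediate expansions $E'$ (the $\sqrt d$-expansion of $E$) and $E''$ (the $\lambda$-expansion of $E'$) rather than centering at the origin and scaling, but the argument is identical.
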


\begin{proof}
Let $C$ denote the given $\lambda$-sandwiching polytope for $K$.
Recalling that $\lambda$-sandwiching polytopes are centrally symmetric, by Lemma~\ref{lem:John} we can find a $\sqrt{d}$-sandwiching ellipsoid $E$ for $C$. As $C$ has constant complexity, we can determine $E$ in $O(1)$ time. In $O(1)$ time, we can also find the affine transformation $T$ that converts $E$ into a Euclidean ball. We claim that $T(K)$ is $\gamma$-fat for $\gamma = 1/(\lambda \sqrt{d})$. To prove this claim, observe that $E \subseteq C \subseteq E'$ and $C \subseteq K \subseteq C'$, where $E'$ is the $\sqrt{d}$-expansion of $E$ and $C'$ is the $\lambda$-expansion of $C$. Letting $E''$ denote the $\lambda$-expansion of $E'$, it is easy to see that $C' \subseteq E''$. It follows that $E \subseteq C \subseteq K \subseteq C' \subseteq E''$. Since $E'$ is the $\sqrt{d}$-expansion of $E$ and $E''$ is the $\lambda$-expansion of $E'$, it follows that $E''$ is the $\lambda \sqrt{d}$-expansion of $E$. Thus $T(K)$ is contained between Euclidean balls $T(E)$ and $T(E'')$, whose radii differ by a factor of $\lambda \sqrt{d}$, which proves the lemma.
\end{proof}

We conclude by showing that we can maintain a small amount of auxiliary information for any collection of convex bodies in order to determine the fattening transformation for the Minkowski sum of any two members of this library. We refer to the data structure for approximate directional width queries from Lemma~\ref{lem:widthqueries} together with the additional information to determine the fattening transformation as the \emph{augmented data structure} for approximate directional width queries.

\begin{lemma} \label{lem:fatten}
Consider any finite collection of convex polytopes in $\RE^d$, and let $\gamma = 1/d^2$. It is possible to store information of constant size with each polytope such that in constant time we can compute a $\gamma$-fattening affine transformation for the Minkowski sum of any two polytopes from the collection. This information can be computed in time proportional to the size of the input polytope.
\end{lemma}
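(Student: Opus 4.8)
The plan is to attach to each polytope a constant-size, centrally symmetric box that sandwiches it, and then at query time to assemble the fattening transformation for a Minkowski sum out of the two stored boxes, using Lemmas~\ref{lem:combine} and~\ref{lem:affine}. Concretely, for each polytope $K$ in the collection I would first compute a $d$-sandwiching ellipsoid as in Lemma~\ref{lem:John}, and then extract from it, exactly as in the proof of Lemma~\ref{lem:rect}, a $(d^{3/2})$-sandwiching hyperrectangle $R_K$ for $K$. A hyperrectangle in $\RE^d$ is specified by $O(d^2) = O(1)$ real numbers (a center, an orthonormal frame, and $d$ half-lengths), so $R_K$ is the constant-size information stored with $K$. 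The only step here that is not immediate is computing the sandwiching ellipsoid; since $d$ is a fixed constant this is a convex program in $O(d^2)$ parameters with $n$ constraints and is solvable in $O(n)$ time, and in any case a constant-factor approximation of the John ellipsoid --- which also suffices, at the price of a larger $d$-dependent sandwiching constant --- is computable in linear time by standard methods. Hence $R_K$ is computed in time proportional to the size of $K$.

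At query time, given two polytopes $A$ and $B$ from the collection, I would retrieve $R_A$ and $R_B$ and form $C = R_A \oplus R_B$. By Lemma~\ref{lem:combine}, $C$ is a $(d^{3/2})$-sandwiching body for $A \oplus B$, and it is centrally symmetric since it is a Minkowski sum of centrally symmetric bodies. The point requiring care is that $C$ must have \emph{constant} combinatorial complexity in order to invoke Lemma~\ref{lem:affine}, and this is not obvious, since the Minkowski sum of two boxes that are not axis-parallel to each other is no longer a box. However, writing each of $R_A$ and $R_B$ as a translate of a Minkowski sum of $d$ segments (one parallel to each edge direction) exhibits $C$ as a translate of a zonotope generated by $2d$ vectors in $\RE^d$; for fixed $d$ such a zonotope has $O(d^{\,d-1}) = O(1)$ vertices and facets, and its description is obtained from the $2d$ generators in $O(1)$ time. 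Feeding this $(d^{3/2})$-sandwiching polytope of constant complexity into Lemma~\ref{lem:affine} with $\lambda = d^{3/2}$ yields, in $O(1)$ time, an affine transformation for which $A \oplus B$ is $\gamma$-fat with $\gamma = 1/(\lambda\sqrt{d}) = 1/d^2$, as claimed. Note that Lemma~\ref{lem:affine} never inspects $A \oplus B$ itself, only the polytope $C$, so the query indeed runs in constant time.

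In summary, the argument is a composition of Lemmas~\ref{lem:John}, \ref{lem:rect}, \ref{lem:combine}, and~\ref{lem:affine}. The two places that need a moment's thought are the zonotope bound that keeps $R_A \oplus R_B$ of constant complexity --- which I expect to be the main obstacle --- and the linear-time computation of the stored sandwiching box, which reduces to known constant-dimensional results on enclosing/inscribed ellipsoids.
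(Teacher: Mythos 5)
Your proposal is correct and follows essentially the same approach as the paper: store a $(d^{3/2})$-sandwiching hyperrectangle $R_K$ for each polytope $K$ (via Lemma~\ref{lem:rect}), form $R_A \oplus R_B$ at query time, invoke Lemma~\ref{lem:combine} to get a $\lambda$-sandwiching body for $A \oplus B$, and feed it to Lemma~\ref{lem:affine} to obtain a $\gamma$-fattening transformation with $\gamma = 1/(\lambda\sqrt{d}) = 1/d^2$. You actually supply two details the paper takes for granted --- the zonotope bound showing $R_A \oplus R_B$ has constant complexity, and the linear-time computation of the stored hyperrectangle (the paper simply cites Chazelle--Matou\v{s}ek for the latter); these are welcome elaborations but do not change the argument.
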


\begin{proof}
At preprocessing time, we store the $\lambda$-sandwiching hyperrectangles $R_i$ for each $K_i$, where $\lambda = d^{3/2}$. By Lemma~\ref{lem:rect}, such hyperrectangles exist and they can be computed in time proportional to the size of the input polytope~\cite{ChM96}.

Suppose we want to compute a $\gamma$-fattening affine transformation for $K'_i \oplus K'_j$, where $K'_i$ and $K'_j$ are the result of applying (possibly different) affine transformations to $K_i$ and $K_j$, respectively. Let $C'_i$ and $C'_j$ be the polytopes of constant complexity obtained by applying the corresponding affine transformations to $R_i$ and $R_j$, respectively. Clearly, $C'_i$ and $C'_j$ are $\lambda$-sandwiching polytopes for $K'_i$ and $K'_j$, respectively. Thus, by Lemma~\ref{lem:combine}, $C'_i \oplus C'_j$ is a $\lambda$-sandwiching polytope for $K'_i \oplus K'_j$. Note that this polytope has constant complexity and can be computed in constant time. Applying Lemma~\ref{lem:affine}, we can use this polytope to compute a $\gamma$-fattening affine transformation for $K'_i \oplus K'_j$ in constant time, where $\gamma = 1/ (\lambda \sqrt{d}) = 1/d^2$. 
\end{proof}

The previous lemma holds more generally even when each of the polytopes are subject to any non-singular affine transformation and to the Minkowski sum of a constant number of polytopes.

\subsection{Projective Duality and Width} \label{ss:duality}

Our algorithm for approximating the directional width of a point set is based on a \emph{projective dual transformation}, which maps points into hyperplanes and vice versa. Each primal point $p = (p_1,\ldots,p_d) \in S$ is mapped to the dual hyperplane $p^*: x_d = p_1 x_1 + \cdots + p_{d-1}x_{d-1} - p_d$. Each primal hyperplane is mapped to a dual point in the same manner. This dual transformation has several well-known properties~\cite{textbook}. For example, the points in the lower convex hull of $S$ map to the hyperplanes in the upper envelope.

Let $H$ be a set of $n$ hyperplanes in $\RE^d$. Given a point $r \in \RE^{d-1}$, the \emph{thickness} of $H$ at $r$, denoted $\thick_r(H)$ is defined as follows. Given $r \in \RE^{d-1}$ and $t \in \RE$, let $(r,t)$ denote the point in $\RE^d$ resulting by concatenating $r$ and $t$. For the sake of illustration, we think of the $d$-th coordinate axis as being the vertical axis. Let $r' = (r,t_1)$ and $r'' = (r,t_2)$. We define $\thick_r(H)$ as the maximum difference $t_2-t_1$ for points $r',r''$ in the hyperplanes in $H$. In other words, the thickness is the vertical distance between the intersection of the vertical line defined by $r$ with the upper and lower envelopes of $H$. The following relates width and thickness.

\begin{lemma} \label{lem:dual}
Consider two points $p,q \in \RE^d$ and a vector $v = (v_1,\ldots,v_{d-1},-1)$. Let $p^*,q^*$ denote the dual hyperplanes and $v_{1,d-1} = (v_1,\ldots,v_{d-1})$. We have
\[\thick_{v_{1,d-1}}(\{p^*,q^*\}) = \|v\| \width_v(\{p,q\}).\]
\end{lemma}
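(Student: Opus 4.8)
The plan is to compute both sides of the claimed identity directly from the definitions and verify they agree. The key observation is that the dual hyperplane $p^*$ is the graph of the linear function $x_d = p_1 x_1 + \cdots + p_{d-1} x_{d-1} - p_d$, so evaluating it at a point $r = v_{1,d-1} = (v_1,\ldots,v_{d-1}) \in \RE^{d-1}$ gives the single vertical coordinate $p^*(r) = \sum_{i=1}^{d-1} p_i v_i - p_d = \langle p, v \rangle$, where the last equality uses that the $d$-th coordinate of $v$ is $-1$. The same holds for $q^*$. Since $\{p^*, q^*\}$ consists of just two hyperplanes, the vertical segment cut by the line through $r$ between the upper and lower envelopes has length exactly $|p^*(r) - q^*(r)|$, so $\thick_{v_{1,d-1}}(\{p^*,q^*\}) = |\langle p,v\rangle - \langle q,v\rangle| = |\langle p - q, v\rangle|$.

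**Next I would** handle the right-hand side. By definition, $\width_v(\{p,q\})$ is the perpendicular distance between the two hyperplanes orthogonal to $v$ that sandwich $\{p,q\}$; since there are only two points, these hyperplanes pass through $p$ and $q$ respectively, and the distance between them is the length of the projection of $p - q$ onto the unit vector $v / \|v\|$, namely $\width_v(\{p,q\}) = |\langle p - q, v\rangle| / \|v\|$. Multiplying by $\|v\|$ yields $\|v\|\,\width_v(\{p,q\}) = |\langle p-q, v\rangle|$, which matches the expression obtained for the thickness. This completes the verification.

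**The main obstacle**, such as it is, is bookkeeping rather than mathematical depth: one must be careful that the sign convention in the dual map ($-p_d$ rather than $+p_d$) is exactly what makes the evaluation $p^*(r)$ equal to the honest inner product $\langle p, v\rangle$ when $v_d = -1$, and one must make sure the thickness is being measured along the vertical ($d$-th coordinate) direction while the width is measured in the perpendicular-to-$v$ direction — the factor $\|v\|$ is precisely the ratio between these two length scales. I would state these two computations as the two halves of the argument and conclude by equating them; no deeper lemma (not even convexity, since two points always form a valid "hull") is needed.
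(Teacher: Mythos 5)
Your proof is correct and is essentially the same direct computation as in the paper: evaluate the dual hyperplanes at $v_{1,d-1}$ to get $\langle p,v\rangle$ and $\langle q,v\rangle$, identify the thickness with their difference, and compare to $\width_v(\{p,q\}) = |\langle p-q,v\rangle|/\|v\|$. The only cosmetic difference is that you carry absolute values while the paper fixes the sign by assuming $p\cdot v \ge q\cdot v$ without loss of generality.
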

\begin{proof}
Given vectors $u$ and $v$, let $u \cdot v$ denote the standard inner product. Assume without loss of generality that $p \cdot v \geq q \cdot v$. Clearly, $v$ is nonzero, so $\width_v(\{p,q\}) = (p \cdot v - q \cdot v)/\|v\|$. 
Let $p = (p_1,\ldots,p_d)$ and $q = (q_1,\ldots,q_d)$. The dual hyperplanes are 
\[
    p^*: x_d = p_1 x_1 + \cdots + p_{d-1}x_{d-1} - p_d \quad\hbox{and}\quad q^*: x_d = q_1 x_1 + \cdots + q_{d-1}x_{d-1} - q_d.
\]
If we set $x_1,\ldots,x_{d-1} = v_{1,d-1}$ we have $t_2 = (p_1,\ldots,p_{d-1}) \cdot v_{1,d-1} - p_d$ and $t_1 = (q_1,\ldots,q_{d-1}) \cdot v_{1,d-1} - q_d$. Therefore
\begin{align*}
\thick_{v_{1,d-1}}(H) & = t_2-t_1 \\
 & = (p_1,\ldots,p_{d-1}) \cdot v_{1,d-1} - p_d - ((q_1,\ldots,q_{d-1}) \cdot v_{1,d-1}  - q_d) \\
 & = p \cdot v - q \cdot v \\
 & = \|v\| \width_v(\{p,q\}). \qedhere
\end{align*}
\end{proof}

\section{Approximate Convex Intersection} \label{s:intersection}

In this section, we will prove Theorem~\ref{thm:intersection} for the case when the input polytopes are represented by points. Assume that we are given two polytopes $A$ and $B$ in the point representation. The objective is to preprocess $A$ and $B$ individually such that we can efficiently answer approximate intersection queries for $A$ and $B$ (or more generally for affine transformations of $A$ and $B$). 

Given a convex body $K$, $\eps > 0$, and a point $p$, an \emph{$\eps$-approximate polytope membership query} is defined as follows.
If $p \in K$, the answer is ``\emph{yes},'' if $p \notin K_\eps$, the answer is ``\emph{no},'' and otherwise, either answer is acceptable.
Our strategy to answer approximate intersection queries is based on reducing them to approximate polytope membership queries. This reduction is presented in the following lemma, which is a straightforward generalization of Lemma~\ref{lem:basic}(a) to an approximate context. The proof follows from standard algebraic properties of Minkowski sums and the observation that $K_\eps$ can be expressed as $K \oplus \frac{\eps}{2}(K \oplus -K)$.

\begin{lemma}
Let $A,B \subset \RE^d$ be two polytopes and $\eps > 0$. Determining the $\eps$-approximate intersection of $A$ and $B$ is equivalent to determining the $\eps$-approximate membership of $O \in A \oplus (-B)$.
\end{lemma}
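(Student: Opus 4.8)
The plan is to reduce the claim to the exact statement of Lemma~\ref{lem:basic}(a) by first establishing the identity $K_\eps = K \oplus \frac{\eps}{2}(K \oplus (-K))$ mentioned above, and then applying it to $A$, $B$, and $A \oplus (-B)$. To prove the identity I would argue via support functions. Note that $K_\eps$, being the intersection of the convex slabs $\Pi_{v,\eps}(K)$, is itself convex, so it suffices to compare the support functions $h_X(v) = \max_{x \in X} x \cdot v$. A short computation shows that the bounding hyperplane of $\Pi_{v,\eps}(K)$ in the direction $v$ has support value $h_K(v) + \frac{\eps}{2}(h_K(v) + h_K(-v))$; since $h_{-K}(v) = h_K(-v)$ and support functions add under Minkowski sum, this is exactly $h_K(v) + \frac{\eps}{2} h_{K \oplus (-K)}(v) = h_L(v)$ for $L := K \oplus \frac{\eps}{2}(K \oplus (-K))$. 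Hence $L \subseteq \Pi_{v,\eps}(K)$ for every $v$, so $L \subseteq K_\eps$; conversely $h_{K_\eps}(v) \le h_L(v)$ because $K_\eps \subseteq \Pi_{v,\eps}(K)$, and therefore $h_{K_\eps} = h_L$, giving $L = K_\eps$.

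With the identity in hand the reduction is a matter of Minkowski-sum algebra. Set $C = A \oplus (-B)$. On the ``yes'' side there is nothing to prove: Lemma~\ref{lem:basic}(a) gives $A \cap B \neq \emptyset \iff O \in C$ exactly. On the ``no'' side, apply Lemma~\ref{lem:basic}(a) to the bodies $A_\eps$ and $B_\eps$ to get $A_\eps \cap B_\eps \neq \emptyset \iff O \in A_\eps \oplus (-B_\eps)$, and then I would show $A_\eps \oplus (-B_\eps) = C_\eps$. Expanding each factor via the identity (and using that $-X_\eps = (-X) \oplus \frac{\eps}{2}(X \oplus (-X))$, since $X \oplus (-X)$ is centrally symmetric) yields $A_\eps \oplus (-B_\eps) = A \oplus (-B) \oplus \frac{\eps}{2}(A \oplus (-A)) \oplus \frac{\eps}{2}(B \oplus (-B))$. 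The two error terms combine, because regrouping gives $(A \oplus (-A)) \oplus (B \oplus (-B)) = (A \oplus (-B)) \oplus ((-A) \oplus B) = C \oplus (-C)$ by commutativity and associativity of $\oplus$, so the right-hand side equals $C \oplus \frac{\eps}{2}(C \oplus (-C)) = C_\eps$, again by the identity.

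Combining the two parts, $A \cap B \neq \emptyset \iff O \in C$ and $A_\eps \cap B_\eps \neq \emptyset \iff O \in C_\eps$, so the $\eps$-approximate intersection query for $A,B$ and the $\eps$-approximate membership query for $O$ in $A \oplus (-B)$ have identical sets of acceptable answers, which is the asserted equivalence. The only genuine obstacle is the first step: $K_\eps$ is defined as an intersection of slabs rather than a Minkowski sum, so one cannot simply read the identity off from the directional-width equality $\width_v(K_\eps) = (1+\eps)\width_v(K)$ --- knowing all directional widths does not determine a convex body. Tracking the position of each slab, not merely its width, through the support-function computation is what closes this gap; the remaining manipulations use only the standard algebraic laws for $\oplus$ and are routine.
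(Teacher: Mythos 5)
Your proof is correct and follows essentially the same route as the paper: reduce both the ``yes'' and ``no'' sides to Lemma~\ref{lem:basic}(a) using the identity $K_\eps = K \oplus \frac{\eps}{2}(K \oplus (-K))$, the observation $-(X_\eps) = (-X)_\eps$, and Minkowski-sum algebra to show $A_\eps \oplus (-B_\eps) = (A\oplus(-B))_\eps$. The one place you go beyond the paper is in actually proving the identity $K_\eps = K \oplus \frac{\eps}{2}(K\oplus(-K))$ via support functions; the paper states it as a known observation, and you are right that it is the only non-routine step, since directional widths alone do not pin down the slab positions.
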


\begin{proof}
We begin by establishing the useful identity $A_\eps \oplus B_\eps = (A \oplus B)_\eps$. By basic properties of Minkowski sums (commutativity and distributivity) we have
\begin{align*}
 A_\eps \oplus B_\eps 
     & = \left(A \oplus \frac{\eps}{2} (A \oplus -A)\right) \oplus \left(B \oplus \frac{\eps}{2}(B \oplus -B)\right) \\
     & = (A \oplus B) \oplus \frac{\eps}{2}\bigg((A \oplus B) \oplus -(A \oplus B)\bigg)
     ~ = ~ (A \oplus B)_\eps,
\end{align*}
as desired.

Returning to the proof, if $A \cap B \neq \emptyset$ then by Lemma~\ref{lem:basic}(a), $O \in A \oplus (-B)$, and the approximate membership query returns ``yes,'' as desired. If $A_\eps \cap B_\eps = \emptyset$ then by Lemma~\ref{lem:basic}(a) we have $O \notin A_\eps \oplus -(B_\eps)$ and by the above identity and the easy fact that $-(B_\eps) = (-B)_\eps$, we have $O \notin (A \oplus (-B))_\eps$, implying that the approximate membership query returns ``no.''
\end{proof}

The previous lemma relates approximate polytope intersection with an approximate membership of the origin in a polytope (Figure~\ref{f:dual}(a)). Determining whether the origin lies within the convex hull of a set of points $S$ is a classic problem in computational geometry, which can be solved by linear programming. However, we are interested in a faster approximate solution that does not compute $S$ explicitly. We cannot afford to preprocess an approximate polytope membership data structure for $A \oplus (-B)$ for each pair $A$ and $B$, since the number of such pairs is quadratic in the number of input polytopes. Instead, we preprocess each input polytope individually, and we show next how to efficiently answer approximate polytope membership queries for $A \oplus (-B)$ by using augmented data structures for approximate directional width queries for $A$ and $B$ as black boxes.

\begin{figure}[tbp]
  \centerline{\includegraphics[scale=0.7]{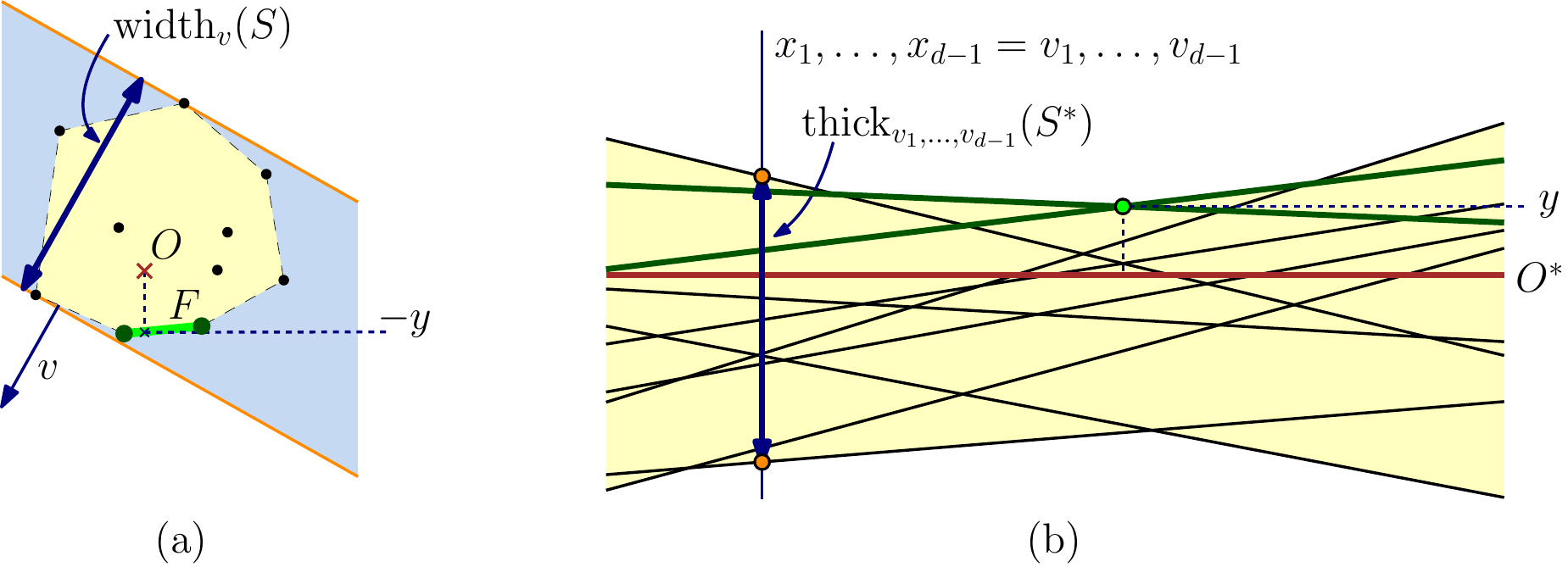}}
  \caption{(a) Primal problem of determining if $O \in \conv(S)$. (b) Dual problem of determining if the horizontal hyperplane $O^*$ is between the upper and lower envelopes.}
  \label{f:dual}
\end{figure}

\begin{lemma} \label{lem:widthtomembership}
Given augmented data structures for answering $\eps$-approximate directional width queries for polytopes $A$ and $B$, we can answer $\eps$-approximate membership queries for $A \oplus (-B)$ using $O(\polylog\inv\eps)$ queries to these data structures.
\end{lemma}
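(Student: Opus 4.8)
The plan is to move to the dual setting described in Section~\ref{ss:duality}. By Lemma~\ref{lem:basic}(a)-type reasoning already invoked, the task of deciding $\eps$-approximate membership of the origin $O$ in $A \oplus (-B)$ is, in the primal, the question of whether $O$ lies in the convex hull of the vertex set $S$ of $A \oplus (-B)$ (up to the $\eps$-slack). Rather than compute $S$, I would apply the projective dual transform: the point $O$ maps to a horizontal hyperplane $O^*$, and the point set $S$ maps to a set of hyperplanes whose upper and lower envelopes are the duals of the upper and lower hulls of $S$. As suggested by Figure~\ref{f:dual}(b) and Lemma~\ref{lem:dual}, $O \in \conv(S)$ is equivalent to $O^*$ lying between the upper and lower envelopes of the dual hyperplane arrangement at the vertical line through the origin of $\RE^{d-1}$ — equivalently, that the vertical extent $\thick_r(\cdot)$ of the envelopes, evaluated at $r = 0 \in \RE^{d-1}$, ``straddles'' height $0$. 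More precisely, $O \notin \conv(S)$ iff there is some direction in which $S$ lies strictly to one side of $O$, which by Lemma~\ref{lem:dual} corresponds to the thickness function of the dual hyperplanes, restricted to the line through the origin and measured relative to the plane $x_d = 0$, having a sign — and $O$-membership corresponds to the upper envelope lying above $0$ while the lower envelope lies below $0$ at $r=0$.

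The key observation that makes this tractable with only directional-width black boxes is that a directional width query for a direction $v = (v_1,\dots,v_{d-1},-1)$ on a point set returns a pair achieving (a $(1-\eps)$-fraction of) the extreme extent in direction $v$, and by Lemma~\ref{lem:dual} this extent equals (up to the known factor $\|v\|$) the dual thickness $\thick_{v_{1,d-1}}$ of the corresponding dual hyperplanes. So I would define, for each parameter vector $r \in \RE^{d-1}$, the quantity $f(r)$ to be the (signed) gap between height $0$ and the lower envelope of the dual hyperplanes of $S$ at $r$ plus the gap to the upper envelope — concretely, write $f$ as a function whose sign decides whether $O^*$ is below all, above all, or between the hyperplanes of $S$. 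Because $S = A \oplus (-B)$, and width is additive under Minkowski sum (Lemma~\ref{lem:basic}(c)), the value of the relevant envelope heights at $r$ for $S$ decomposes additively into contributions from $A$ and from $-B$, each of which is recoverable (approximately) by a single directional width query to the respective augmented data structure using the direction $v = (r, -1)$, together with an offset computed from the returned pair of vertices. Thus I can evaluate $f$ (approximately) at any $r$ in polylogarithmic time using one query to each of the two data structures.

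It remains to locate a point $r^\*$ realizing the right sign of $f$, or certify none exists. The envelope heights are concave/convex piecewise-linear in $r$, so $f$ is a difference of such functions and the feasible region (where $O$ is straddled) is convex; deciding nonemptiness of this region, and in the approximate setting certifying either $O \in A\oplus(-B)$ or $O \notin (A\oplus(-B))_\eps$, is a $(d-1)$-dimensional convex feasibility / minimization problem that can be solved with a polylogarithmic number of evaluations of $f$ — for instance via a grid-refinement or cutting-plane / Chan-style randomized optimization over a ball of radius determined by the fattening transformation (here is where Lemma~\ref{lem:fatten} enters: after applying the $\gamma$-fattening affine map to $A\oplus(-B)$ with $\gamma = 1/d^2$, the relevant range of slopes $r$ one must search is bounded by a constant, so the search domain has constant diameter). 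I expect the main obstacle to be the approximation bookkeeping: each directional width query is only accurate to a $(1-\eps)$ factor, and the $\eps$-membership notion is defined via the slab-expansion $K_\eps$ rather than a simple additive or multiplicative error on $f$, so I must verify that a direction witnessing $f$ having the ``outside'' sign under the approximate queries really certifies $O \notin A\oplus(-B)$ (easy direction, since the returned pair lies in $S$), while the converse — that if $O \notin (A\oplus(-B))_\eps$ then the approximate search finds a witnessing $r$ — requires relating the $(1+\eps)$ slab expansion in every direction to the $(1-\eps)$ guarantee of the width oracle and to the discretization error of the $(d-1)$-dimensional search, and choosing the search parameters (and a slightly smaller internal $\eps'$) so these losses compose correctly into the stated $O(\polylog\inv\eps)$ query bound.
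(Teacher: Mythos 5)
Your plan matches the paper's proof at a high level: dualize via Lemma~\ref{lem:dual}, use Lemma~\ref{lem:basic}(c) to split each directional width query on $A\oplus(-B)$ into one query each on $A$ and $B$, use the fattening transformation (Lemma~\ref{lem:fatten}) to normalize so that the relevant search domain of dual slopes has constant diameter and directional widths are $\Theta(1)$, and then reduce the decision to a $(d-1)$-dimensional convex search over the (approximately evaluable) envelope function. That is exactly the architecture the paper uses.

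However, you leave two substantive gaps, both of which you acknowledge but neither of which you close. First, the $(d-1)$-dimensional convex minimization with only $O(\polylog\inv\eps)$ calls to an \emph{additively $\eps$-noisy} oracle is the technical heart of the argument and cannot be waved at: a plain grid over $[-\alpha,\alpha]^{d-1}$ needs $1/\eps^{d-1}$ probes, not polylog, and a cutting-plane method must be re-analyzed to survive $O(\eps)$ evaluation error without its cuts becoming unsound. The paper supplies exactly this missing piece as Lemma~\ref{lem:bin-kd}, a nested trisection/binary-search argument that exploits convexity plus bounded slope and carefully tracks the additive error through each recursive level; your proof needs some concrete such lemma. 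Second, the paper makes a clean simplification you miss: after translating $O$ to the origin and rotating so that the center $c$ of the sandwiching ball lies on the positive $x_d$-axis, the point $c$ lies vertically above $O$ and inside $K$, which forces $O^*$ never to meet the \emph{lower} envelope of $S^*$, so only the upper envelope must be checked. You instead formulate a single function $f$ mixing both envelopes, which is workable but muddier — in particular it is not a convex function (it is a difference of a convex and a concave piece) so it does not directly feed into a convex-minimization subroutine, whereas the paper minimizes the genuinely convex upper-envelope graph. If you adopt the paper's rotation trick, the target becomes ``minimize a convex, bounded-slope function over a constant-diameter box using a $\Theta(\eps)$-accurate oracle,'' and then your plan plus a Lemma~\ref{lem:bin-kd}-style nested search completes the proof.
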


\begin{proof}
Without loss of generality, we may translate space so that the query point coincides with the origin $O$. Let $K = A \oplus (-B)$, and let $S$ be $K$'s vertex set. (Note that $K$ and $S$ are not explicitly computed.)

The problem of determining whether $O \in K$ is invariant to scaling and rotation about the origin. It will be helpful to perform some affine transformations that will guarantee certain properties for $K$. First, we apply Lemma~\ref{lem:fatten} to fatten $K$ and then apply a uniform scaling about the origin so that $K$'s diameter is $\Theta(1)$. By fatness, $K$ has a $\lambda$-sandwiching ball of radius $r = \Theta(1)$. If the origin either lies within the inner ball or outside the outer ball, then the answer is trivial. Otherwise, let $\Delta = 2 \lambda r$ be the diameter of the outer ball. We may apply a rotation about the origin so that the center of this ball lies on the positive $x_d$ axis at a point $(0,\ldots,0,\beta)$.
Again, this scaling and rotation can be computed in constant time using the augmented information. It follows that the coordinates of the points of $S$ have absolute values at most $\Delta = \Theta(1)$.

In summary, there exists an affine transformation computable in constant time such that after applying this transformation, the query point lies at the origin, $K = \conv(S)$ is sandwiched between two concentric balls of constant radii centered at $c = (0,\ldots,0,\beta)$, where $0 < \beta \le \Delta = O(1)$, and $K$'s vertex set $S$ is contained within $[-\Delta,\Delta]^d$. It is an immediate consequence that $\width_v(K) = \Theta(1)$ for all directions $v$, and hence it suffices to answer the membership query to an absolute error of $\Theta(\eps)$.

Lemma~\ref{lem:basic}(c) implies that we can answer $\eps$-approximate width queries for $K$ as the sum of two $\eps$-approximate width queries to $A$ and $B$. Therefore, our goal is to determine approximately if $O \in K$ using only approximate width queries to $A$ and $B$. In order to do this, we look at the projective dual problem in which each point $p = (p_1,\ldots,p_d) \in S$ is mapped to the hyperplane $p^*: x_d = p_1 x_1 + \cdots + p_{d-1}x_{d-1} - p_d$. Let $S^*$ denote the corresponding set of hyperplanes. The primal problem $O \in K$ is equivalent to the dual problem of determining whether the horizontal hyperplane $O^*: x_d = 0$ is sandwiched between the upper and lower envelopes of $S^*$ (Figure~\ref{f:dual}(b)). Since the point $c$ lies vertically above the origin and within $K$'s interior, it follows that $O^*$ cannot intersect the lower envelope. Therefore, it suffices to test whether $O^*$ intersects the upper envelope.

The dual problem can be solved exactly by computing the minimum value $y$ of the $x_d$-coordinate in the upper envelope and testing whether $y > 0$. In the primal, the value of $y$ corresponds to the negated $x_d$-coordinate of the intersection of a facet $F$ of the lower convex hull of $K$ and a vertical line passing through the origin (see Figure~\ref{f:dual}). Let $F$'s supporting hyperplane be denoted by $x_d = w_1 x_1 + \cdots + w_{d-1}x_{d-1} - w_d$. Since $K$ is sandwiched between two concentric balls of constant radii whose common center lies on this vertical line, it follows from simple geometry that this supporting hyperplane cannot be very steep. In particular, there exists $\alpha = O(1)$ such that $w_i \in [-\alpha,\alpha]$, for $i = 1,\ldots,d-1$. In the dual, this means that the minimum value $y$ is attained at a point whose first $d-1$ coordinates all lie within $[-\alpha,\alpha]$. In approximating $y$, we will apply directional width queries only for directional vectors $v = (v_1, \ldots, v_d)$ whose first $d-1$ coordinates lie within $[-\alpha,\alpha]$ and $v_d = -1$. Thus, $\|v\| = O(1)$. 

By Lemma~\ref{lem:dual}, the duals of two points $p, q \in S$ returned by an exact directional width query $\width_v(K)$ in the primal for a vector $v = (v_1,\ldots,v_{d-1},-1)$ correspond to the two dual hyperplanes in the upper and lower envelopes of $S^*$ that intersect the vertical line $x_i = v_i$ for $i = 1,\ldots,d-1$. Since queries are only applied to directions $v$ where $\|v\| = O(1)$ and since $\width_v(K) = \Theta(1)$ for all directions $v$, it follows from Lemma~\ref{lem:dual} that a relative error of $\eps$ in the directional width implies an absolute error of $O(\eps)$ in the corresponding thickness. We can think of the upper envelope of $S^*$ as defining the graph of a convex function over the domain $[-\alpha,\alpha]^{d-1}$. Since $S \subset [-\Delta,\Delta]^d$, the slopes of the hyperplanes in $S^*$ are similarly bounded, and therefore this function has bounded slope. It follows that, for an appropriate $\eps' = \Theta(\eps)$, we can compute this function to an absolute error of $\eps$ at any $(v_1,\ldots,v_{d-1})$ by performing an $(\eps')$-approximate directional width query on $K$ for $v = (v_1,\ldots,v_{d-1},-1)$. To complete the proof, it suffices to show that with $O(\polylog\inv\eps)$ such queries, it is possible to compute an absolute $\eps$-approximation to $y$. We do this in the next section.
\end{proof}

\subsection{Convex Minimization} \label{ss:minimization}

The following lemma shows how to use binary search to solve a one-dimensional convex minimization problem approximately (see Figure~\ref{f:binarysearch}(a)).

\begin{figure}[tbp]
  \centerline{\includegraphics[scale=0.7]{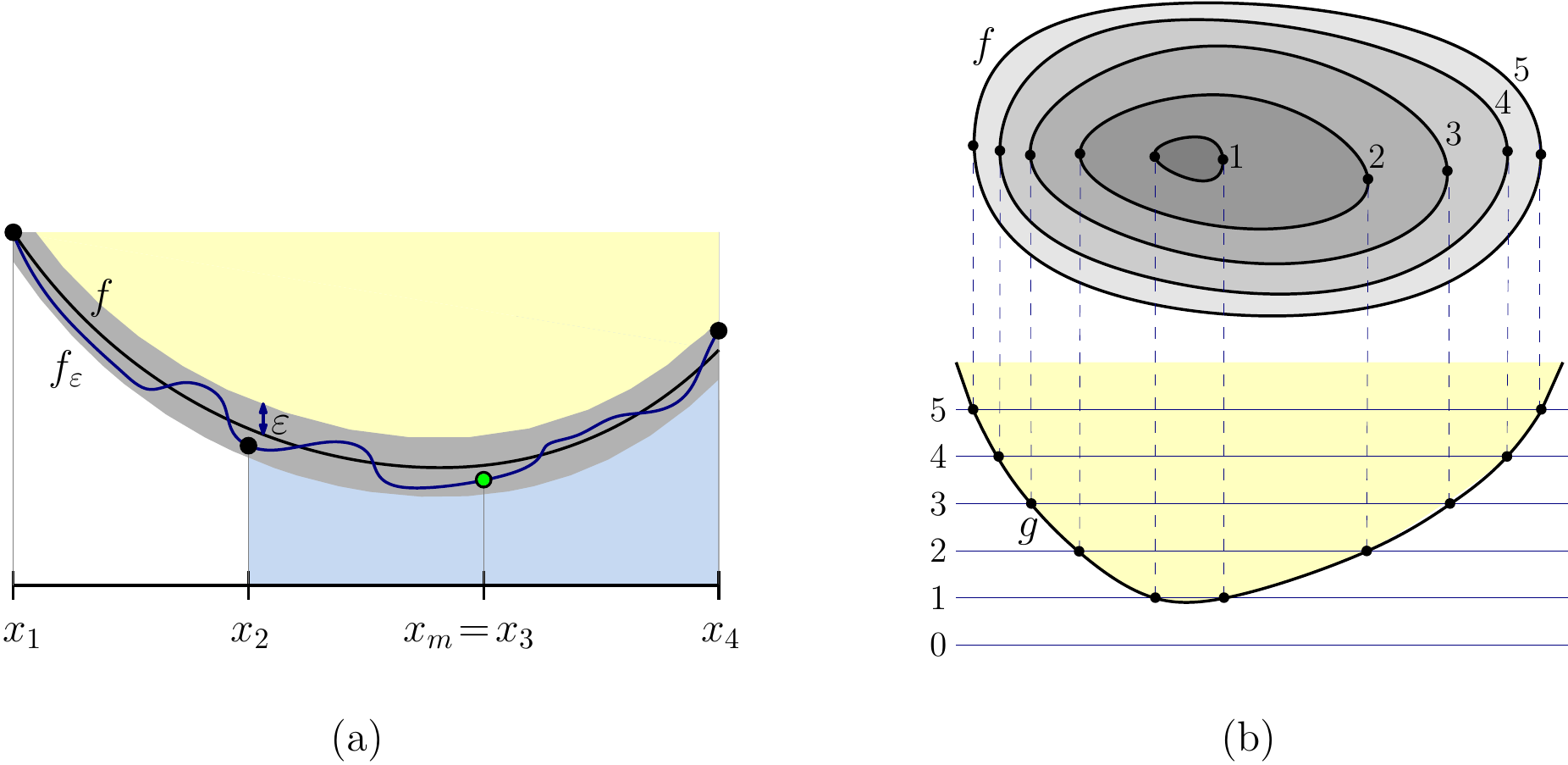}}
  \caption{(a) One-dimensional convex minimization. (b) Higher-dimensional convex minimization.}
  \label{f:binarysearch}
\end{figure}

\begin{lemma} \label{lem:bin-1d}
Let $a,b \in \RE$ and $\eps \in \RE+$ be real parameters. Let $f : [a,b] \rightarrow \RE$ be a convex function with bounded slope and $f_\eps : [a,b] \rightarrow \RE$ be a function with $|f(x) - f_\eps(x)| \leq \eps$ for all $x \in [a,b]$. Let $x^* \in [a,b]$ be the value of $x$ that minimizes $f(x)$. It is possible to determine a value $x'$ with $f(x') - f(x^*) = O(\eps)$ after $O(\log ((b-a)/\eps))$ evaluations of $f_\eps(\cdot)$ and no evaluation of $f(\cdot)$.
\end{lemma}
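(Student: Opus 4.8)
The plan is to run a ternary-style search on $[a,b]$ that copes with the only real difficulty of the noisy setting: near $x^*$ the function is essentially flat, so the oracle $f_\eps$ cannot reliably decide which of two nearby probes is smaller. The saving grace is that we are asked only for a point of near-optimal \emph{value}, and a flat region is exactly a place where many points already have near-optimal value; so whenever the search ``gets stuck'' we will be able to certify that some probe is already good enough.

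Concretely, I would maintain an interval $[\ell,r]\subseteq[a,b]$ and the best probe so far, $x_{\mathrm{best}}$, defined as the evaluated point of smallest $f_\eps$-value; this immediately gives $f(x_{\mathrm{best}})\le f(p)+2\eps$ for every point $p$ ever evaluated. Each round probes $f_\eps$ at the two points $m_1<m_2$ that cut $[\ell,r]$ into thirds and branches on the comparison: (i) if $f_\eps(m_1)\le f_\eps(m_2)-4\eps$, then the error bound gives $f(m_1)<f(m_2)$, so by convexity $x^*<m_2$ and we recurse on $[\ell,m_2]$; (ii) symmetrically, if $f_\eps(m_2)\le f_\eps(m_1)-4\eps$, recurse on $[m_1,r]$; (iii) otherwise $|f_\eps(m_1)-f_\eps(m_2)|<4\eps$, hence $|f(m_1)-f(m_2)|<6\eps$, and we recurse on $[m_1,m_2]$. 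Every round shrinks $r-\ell$ by a factor of at least $2/3$, so after $O(\log((b-a)/\eps))$ rounds, using two oracle calls per round, the width falls below $\eps$; then one more probe at an endpoint plus the bounded-slope hypothesis and the defining property of $x_{\mathrm{best}}$ yield $f(x_{\mathrm{best}})-f(x^*)=O(\eps)$, and we output $x_{\mathrm{best}}$. No call to $f$ is ever made.

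Correctness is carried by the invariant: after each round, \emph{either} $x^*\in[\ell,r]$, \emph{or} some already-evaluated point --- and hence $x_{\mathrm{best}}$ --- has $f$-value within $O(\eps)$ of $f(x^*)$. Branches (i) and (ii) preserve the first alternative directly (note $\ell$ never increases in (i) and $r$ never decreases in (ii)). Branch (iii) is the step I expect to be the crux, and it is precisely where the split into thirds is used: if $x^*\in[m_1,m_2]$ the first alternative survives; if instead $x^*\in(m_2,r]$, then $f$ is nonincreasing on $[\ell,x^*]\supseteq[m_1,m_2]$ so $0\le f(m_1)-f(m_2)<6\eps$, and since $m_2$ divides $[m_1,x^*]$ in ratio $t=(m_2-m_1)/(x^*-m_1)\ge(m_2-m_1)/(r-m_1)=1/2$, convexity gives $f(x^*)\ge f(m_1)-(f(m_1)-f(m_2))/t$, whence $f(m_2)\le f(m_1)\le f(x^*)+O(\eps)$ and the second alternative takes over (the case $x^*\in[\ell,m_1)$ is symmetric). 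At termination the first alternative forces $x^*$ into an interval of length $\le\eps$, and bounded slope converts this into an $O(\eps)$ bound on values, while the second alternative gives the bound outright.
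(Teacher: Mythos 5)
Your proof is correct and takes essentially the same approach as the paper's: a noisy trisection search that shrinks the interval by a constant factor per round while tracking the best probe so far, with convexity used to show that whenever the true minimizer escapes the recursion interval, some already-evaluated point is within $O(\eps)$ of optimal. The paper's variant is slightly more uniform---it probes all four trisection endpoints and always recurses on the two subintervals flanking the minimizing probe, avoiding your explicit $4\eps$-threshold case split---but the invariant, the convexity estimate, and the $O(\log((b-a)/\eps))$ bound are the same.
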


\begin{proof}
First, we present the recursive algorithm used to determine the value $x'$.
If $b-a < \eps$, then since the function has bounded slope, we simply return $x' = a$, as a valid answer.

Otherwise, we start by trisecting the interval $[a,b]$ and evaluate $f_\eps(x)$ at the four endpoints $x_1,x_2,x_3,x_4$ of the subintervals (see Figure~\ref{f:binarysearch}(a)).
Let $m$ denote the value $i$ that minimizes $f_\eps(x_i)$, breaking ties arbitrarily.
To simplify the boundary cases, let $x_0 = a$ and $x_5 = b$.
We then invoke our algorithm recursively on the interval $[x_{m-1}, x_{m+1}]$ and store the value returned as $x''$. We return the value $x$ among the two values $x_m,x''$ that minimizes $f_\eps(x)$.

Since the length of the interval reduces by at least one third at each iteration, the number of recursive calls and therefore evaluations of $f_\eps(\cdot)$ is $O(\log ((b-a)/\eps))$. 
Next, we show that $f(x') - f(x^*) = O(\eps)$. By the convexity of $f$ we have
\[ f(x) ~\geq~ f(x_{m+1}) + 3 (x-x_{m+1}) (f(x_{m+1})-f(x_m))/(b-a)               \text{, for }x \geq x_{m+1}.\]
Using that $|f(x) - f_\eps(x)| \leq \eps$, we have
\[ f(x) ~\geq~ f_\eps(x_{m+1}) - \eps + 3 (x-x_{m+1}) (f_\eps(x_{m+1})-f_\eps(x_m)-2\eps)/(b-a)               \text{, for }x \geq x_{m+1}.\]
Since $f_\eps(x_m) \leq f_\eps(x_{m+1})$, we have 
\[ f(x) ~\geq~ f_\eps(x_m) - \eps - 6 \eps (x-x_{m+1}) /(b-a)               \text{, for }x \geq x_{m+1}.\]
For $x$ inside the interval $[a,b]$ we have $|x-x_{m+1}| \leq b-a$, and therefore
\[ f(x) ~\geq~ f_\eps(x_m) - 7 \eps                 \text{, for }x_{m+1} \leq x \leq b.\]
The same argument is used to bound the case of $a \leq x \leq x_{m-1}$, obtaining
\[ f(x) ~\geq~ f_\eps(x_m) - 7 \eps                 \text{, for }x \notin [x_{m-1},x_{m+1}].\]

Either the minimum of $f(x)$ is inside the interval $[x_{m-1},x_{m+1}]$ or not. If it is not, then the previous inequality shows that $f_\eps(x_m)$ provides a good approximation, regardless of the value returned in the recursive call. If the minimum is inside the interval $[x_{m-1},x_{m+1}]$, then the recursive call will provide a value result by an inductive argument.
\end{proof}

We are now ready to extend the result to arbitrary dimensions.

\begin{lemma} \label{lem:bin-kd}
Let $a,b \in \RE$ and $\eps \in \RE+$ be real parameters. Let $f : [a,b]^d \rightarrow \RE$ for a constant dimension $d$ be a convex function with bounded slope and $f_\eps : [a,b]^d \rightarrow \RE$ be a function with $|f(x) - f_\eps(x)| \leq \eps$ for all $x \in [a,b]^d$. Let $x^* \in [a,b]^d$ be the value of $x$ that minimizes $f(x)$. It is possible to determine a value $x'$ with $f(x') - f(x^*) = O(\eps)$ after $O(\log^d ((b-a)/\eps))$ evaluations of $f_\eps(\cdot)$ and no evaluation of $f(\cdot)$.
\end{lemma}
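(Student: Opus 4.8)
The plan is to reduce the $d$-dimensional problem to a one-dimensional problem along one coordinate axis, and then apply Lemma~\ref{lem:bin-1d} recursively in the remaining $d-1$ coordinates. Concretely, write a point of $[a,b]^d$ as $(y, z)$ with $y \in [a,b]$ and $z \in [a,b]^{d-1}$. For each fixed $y$, the restriction $f_y(z) := f(y, z)$ is a convex function on $[a,b]^{d-1}$ with bounded slope, approximated within $\eps$ by $(f_\eps)_y(z) := f_\eps(y,z)$; by the inductive hypothesis (on $d$) we can compute a point $z'(y)$ with $f_y(z'(y)) - \min_z f_y(z) = O(\eps)$ using $O(\log^{d-1}((b-a)/\eps))$ evaluations of $f_\eps$. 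Define $g(y) := \min_{z} f(y,z)$, the marginal minimum. This $g$ is convex in $y$ (a partial minimization of a jointly convex function is convex) and has bounded slope, and the quantity we can actually evaluate, $g_\eps(y) := f_y(z'(y))$, satisfies $|g(y) - g_\eps(y)| = O(\eps)$ by the previous sentence. Minimizing $g$ over $y \in [a,b]$ is exactly the setting of Lemma~\ref{lem:bin-1d} (after rescaling $\eps$ by a constant to absorb the $O(\cdot)$), so we obtain a value $y'$ with $g(y') - \min_y g(y) = O(\eps)$ using $O(\log((b-a)/\eps))$ evaluations of $g_\eps$. We then return $x' = (y', z'(y'))$.

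The key steps, in order, are: (i) set up the marginal function $g(y) = \min_z f(y,z)$ and verify it is convex with bounded slope; (ii) invoke the inductive hypothesis to get an approximate evaluator $g_\eps$ for $g$ that is within $O(\eps)$ of $g$, at a cost of $O(\log^{d-1}((b-a)/\eps))$ calls to $f_\eps$ per evaluation; (iii) run the one-dimensional procedure of Lemma~\ref{lem:bin-1d} on $g$ using $g_\eps$; (iv) combine the error bounds: $f(x') = g_\eps(y') \le g(y') + O(\eps) \le \min_y g(y) + O(\eps) = \min_{y,z} f(y,z) + O(\eps) = f(x^*) + O(\eps)$; (v) tally the evaluation count: $O(\log((b-a)/\eps))$ outer evaluations, each costing $O(\log^{d-1}((b-a)/\eps))$ inner ones, for a total of $O(\log^d((b-a)/\eps))$. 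The base case $d=1$ is precisely Lemma~\ref{lem:bin-1d}.

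The main obstacle is the careful bookkeeping of the approximation error through the nesting. Lemma~\ref{lem:bin-1d} requires that its approximate oracle be within an \emph{additive} $\eps$ of the true function, but our oracle $g_\eps$ is only within $O(\eps)$, with the hidden constant accumulating a factor per level of recursion; one must check that this constant stays bounded (it does, since $d$ is a fixed constant, so a $d$-fold compounding of constant factors is still $O(1)$) and rescale the target accuracy accordingly before each recursive invocation. A second, more subtle point is confirming that $g(y) = \min_z f(y,z)$ genuinely has \emph{bounded slope}: this follows because for $y_1, y_2$ the minimizers $z_1^*, z_2^*$ give $g(y_1) - g(y_2) \le f(y_1, z_2^*) - f(y_2, z_2^*)$, which is bounded by $|y_1 - y_2|$ times the slope bound of $f$, and symmetrically; so $g$ inherits the same Lipschitz constant as $f$. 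With these two checks in hand the induction goes through cleanly.
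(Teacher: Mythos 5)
Your proposal takes essentially the same route as the paper: define the marginal $g(x_1) = \min_{\tilde x} f(x_1, \tilde x)$ by partial minimization, approximate $g$ via the $(d-1)$-dimensional recursion, and finish with the one-dimensional trisection of Lemma~\ref{lem:bin-1d}, with the evaluation count multiplying across levels. You also supply two checks the paper leaves implicit, namely that $g$ inherits the Lipschitz bound of $f$ (via the cross-minimizer argument) and that the $O(\eps)$ constants compound only a constant number ($d$) of times. One small slip to fix: you set $g_\eps(y) := f_y(z'(y))$, which evaluates the exact $f$ and is therefore not an admissible oracle; it should be $g_\eps(y) := f_\eps(y, z'(y))$, which still satisfies $|g(y)-g_\eps(y)| = O(\eps)$ by the triangle inequality and keeps every evaluation on $f_\eps$ as required.
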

\begin{proof}
The minimum $f(x^*)$ can be written as
\[f(x^*) = \min_{x \in [a,b]^d} f(x) = \min_{x_1 \in [a,b]} \; \min_{\tilde{x} \in [a,b]^{d-1}} f(x_1,\tilde{x}).\]
Note that if $f(x)$ is a convex function with bounded slope, then so is the function $g : [a,b] \rightarrow \RE$ (see Figure~\ref{f:binarysearch}(b)) defined as
\[g(x_1) = \min_{\tilde{x} \in [a,b]^{d-1}} f(x_1,\tilde{x}).\]

The proof is based on induction on the dimension $d$. Since $d$ is a constant, the number of induction steps is also a constant. The base case of $d=1$ follows from Lemma~\ref{lem:bin-1d}. By the induction hypothesis, we can solve the $(d-1)$-dimensional instance to obtain a function $g'(x_1)$ such that
\[|g(x_1) - g'(x_1)| = O(\eps).\]
Using Lemma~\ref{lem:bin-1d} for the function $g'(\cdot)$, we obtain a value $x'$ with  $f(x') - f(x^*) = O(\eps)$.

For the number of function evaluations $t(d)$ for a given dimension $d$ we have
\[t(1) = O(\log ((b-a)/\eps)) \text{ and}\]
\[t(k) = t(1) \cdot t(k-1).\]
The recurrence easily solves to the desired
\[t(d) = O(\log^d ((b-a)/\eps)).\qedhere\]
\end{proof}

By applying Lemma~\ref{lem:bin-kd} to the dual problem defined in the proof of Lemma~\ref{lem:widthtomembership} (where $f$ is the graph of the upper envelope of $S^*$ and $[a,b] = [-\alpha,\alpha]$) with the augmented data structure from Lemma~\ref{lem:widthqueries}, we obtain Theorem~\ref{thm:intersection} for the case when the input polytopes are represented by points. 
We will consider the case when the input polytopes are represented by halfspaces at the end of the next section.

\section{Minkowski Sum Approximation} \label{s:minkowski}

In this section, we will prove Theorems~\ref{thm:minkowski} and~\ref{thm:width}, as well as Theorem~\ref{thm:intersection} for the case when the input polytopes are represented by halfspaces. Assume that we are given two polytopes $A$ and $B$ in the point representation, and we have computed the augmented approximate directional width data structures from Lemma~\ref{lem:widthqueries} for each polytope. The objective is to obtain an $\eps$-approximation of the Minkowski sum $A \oplus B$ of size $O(1/\eps^{(d-1)/2})$ using these data structures.
Our approach is to fatten $A \oplus B$ using Lemma~\ref{lem:fatten} and then apply Dudley's construction~\cite{Dud74} in order to obtain an approximation with $O(1/\eps^{(d-1)/2})$ halfspaces. For completeness, we start by describing Dudley's algorithm.

Let $K \subset [-1,1]^d$ be a fat polytope of constant diameter.
Dudley's algorithm obtains an $\eps$-approximation represented by halfspaces as follows. Let $D$ be a ball of radius $2\sqrt{d}$ centered at the origin. (Note that $K \subset D$.) Place a set $W$ of $\Theta(1/\eps^{(d-1)/2})$ points on the surface of $D$ such that every point on the surface of $D$ is within distance $O(\sqrt{\eps})$ of some point in $W$. For each point $w \in W$, let $w'$ be its nearest point on the boundary of $K$. We call these points \emph{samples}. For each sample point $w'$, take the supporting halfspace passing through $w'$ that is orthogonal to the vector from $w'$ to $w$. The approximation is defined as the intersection of these halfspaces (see Figure~\ref{f:dudley}(a)).

\begin{figure}[tbp]
  \centerline{\includegraphics[scale=0.7]{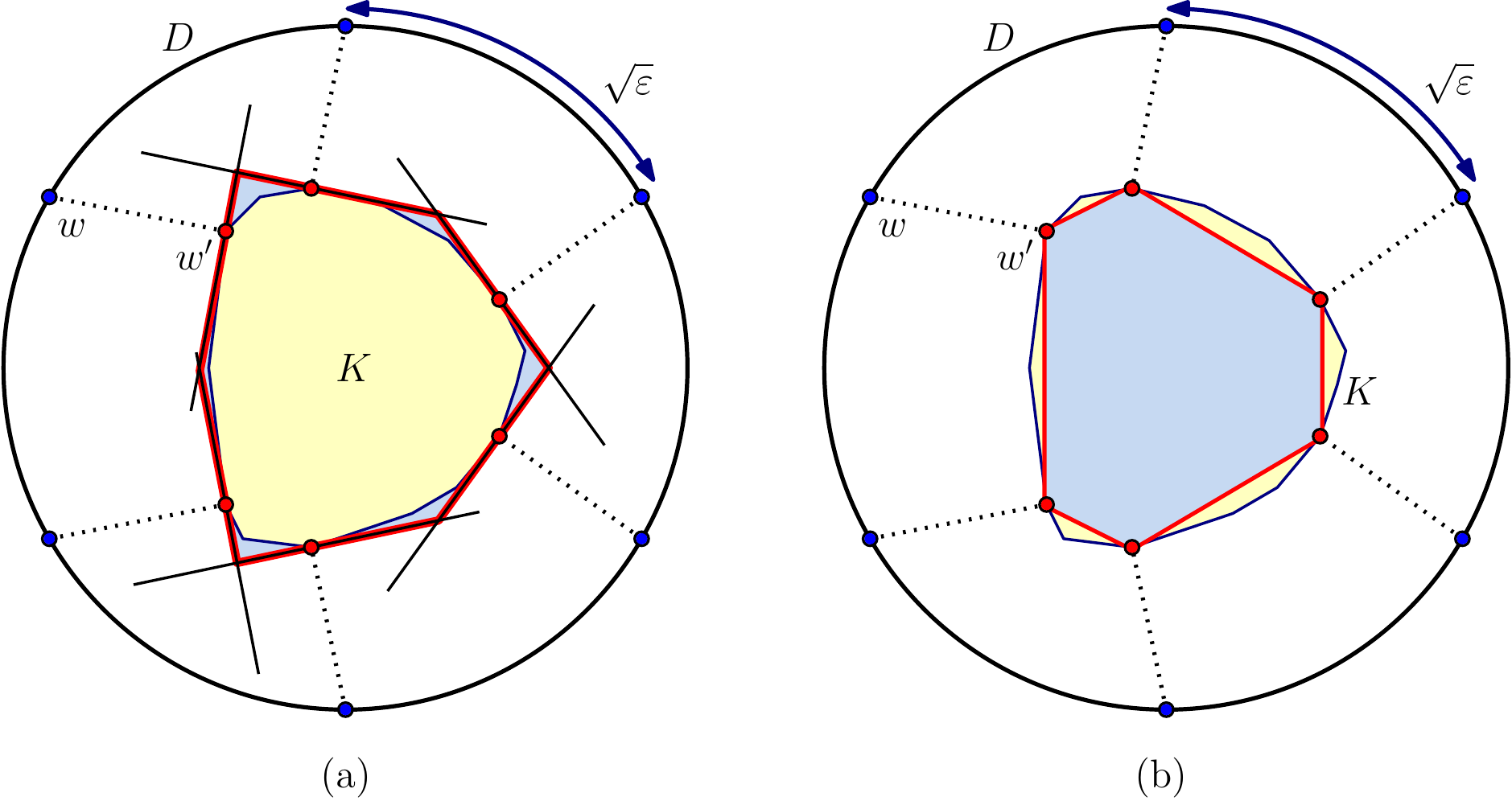}}
  \caption{(a) Dudley's and (b) Bronshteyn and Ivanov's polytope approximations.}
  \label{f:dudley}
\end{figure}

Bronshteyn and Ivanov~\cite{BrI76} presented a similar construction. Instead of approximating $K$ by halfspaces, Bronshteyn and Ivanov's construction approximates $K$ as the convex hull of the aforementioned set of samples%
\footnote{Dudley's construction yields an outer approximation and Bronshteyn and Ivanov's yields inner approximation, but it is possible to convert both to the other type through standard techniques. For details, see Lemma 2.8 of the full version of \cite{AFM17b}.}
(see Figure~\ref{f:dudley}(b)). In both constructions it is possible to tune the constant factors so that closest point queries need only be computed to within an absolute error of $\Theta(\eps)$.

An approximate closest point query between a polytope $K$ and a point $p$ within constant distance from $K$ can be reduced to computing an $\eps$-approximation to the smallest radius ball centered at $p$ that intersects $K$. This can be solved through binary search on the radius of this ball, where each probe involves determining whether $K$ intersects a ball of some radius centered at $p$. Notice that the data structure for approximate polytope intersection from Section~\ref{s:intersection} only accesses the bodies through approximate directional width queries, besides the initial fattening transformation. By Lemma~\ref{lem:basic}(c), given two preprocessed bodies $A$ and $B$, we can answer directional width queries on $A \oplus B$ through directional width queries on $A$ and $B$ individually. (In the case of a ball, no data structure is required.) Therefore, we can test intersection with a Minkowski sum $A \oplus B$, as long as we have augmented approximate directional width data structures for both $A$ and $B$.

In order to establish Theorem~\ref{thm:minkowski} for the case when the input polytopes are represented by points, we apply the aforementioned binary search to simulate Dudley's construction. Each sample is obtained after $O(\log \inv \eps)$ $\eps$-approximate polytope intersection queries. The total running time is dominated by the preprocessing time of Lemma~\ref{lem:widthqueries}. Note that the output polytope may be represented by either points or halfspaces according to whether we use Dudley's or Bronshteyn and Ivanov's algorithm.
To show that the input polytopes may be represented by halfspaces, we show how to efficiently convert between the two representations.

\begin{lemma}
Given an approximation parameter $\eps > 0$ and a polytope $K \subset \RE^d$ of size $n$ (given either using a point or halfspace representation), we can obtain an $\eps$-approximation of size $O(1/\eps^{(d-1)/2})$ (in either representation, independent of the input representation) in $O(n \log\inv\eps + 1/\eps^{(d-1)/2+\alpha})$ time, where $\alpha > 0$ is an arbitrarily small constant.
\end{lemma}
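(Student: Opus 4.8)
The plan is to obtain the lemma by combining ingredients already in place: Dudley's and Bronshteyn--Ivanov's constructions yield either output representation as soon as one can answer approximate nearest-point (equivalently, approximate ball-intersection) queries on the target polytope, and such queries are serviced by the augmented approximate directional-width data structure of Lemma~\ref{lem:widthqueries} via the reduction of Lemma~\ref{lem:widthtomembership}. I would treat a point-represented input and a halfspace-represented input separately, reducing the latter to the former by polar duality.

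\emph{Point-represented input.} Given $K=\conv(S)$ with $|S|=n$, build the augmented directional-width data structure of Lemma~\ref{lem:widthqueries} on $S$; by Lemma~\ref{lem:fatten} (equivalently Lemmas~\ref{lem:rect} and~\ref{lem:affine}) this also yields, in $O(n)$ additional time, a fattening affine transformation for $K$. A routine computation shows that a non-singular affine map $T$ satisfies $T(K_\eps)=(T(K))_\eps$ (each minimal slab maps to a minimal slab and the central $(1+\eps)$-expansion is preserved), so applying $T$ preserves $\eps$-approximation and we may assume $K$ is fat, of constant diameter, and contained in $[-1,1]^d$. Now simulate Dudley's construction: for each of the $\Theta(1/\eps^{(d-1)/2})$ sphere points $w$, find an approximate nearest point of $\partial K$ to $w$ by binary search on the radius $\rho$ of a ball $D$ about $w$, each probe testing whether $K$ meets $D$. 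This test is an $\eps$-approximate membership query for $O\in K\oplus(-D)$, which by Lemma~\ref{lem:widthtomembership} costs $O(\plog)$ directional-width queries on $K$ (and none on $D$, a ball). Hence all $O(1/\eps^{(d-1)/2})$ samples together cost $O((1/\eps^{(d-1)/2})\plog)=O(1/\eps^{(d-1)/2+\alpha})$ time, and the total is $O(n\log\inv\eps+1/\eps^{(d-1)/2+\alpha})$, dominated by Lemma~\ref{lem:widthqueries}. The intersection of the Dudley halfspaces is a halfspace-represented $\eps$-approximation; the convex hull of the same samples (Bronshteyn--Ivanov) is a point-represented one.

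\emph{Halfspace-represented input.} First compute a $(d^{3/2})$-sandwiching hyperrectangle for $K$ in $O(n)$ time (\cite{ChM96}, as in Lemma~\ref{lem:fatten}), apply the associated fattening transformation (Lemma~\ref{lem:affine}), and translate so that $O$ lies in the interior of $K$; thus $B(O,r)\subseteq K\subseteq B(O,R)$ with $r,R=\Theta(1)$. Normalizing each facet inequality to the form $p_i\cdot x\le 1$ (possible since $O$ is interior, so each right-hand side is positive), the polar body $K^\circ$ equals $\conv\{p_1,\dots,p_n\}$, an $n$-point $V$-polytope that is again fat, of constant diameter, with $O$ in its interior. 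Apply the point-represented case to $K^\circ$ with parameter $\eps'=\Theta(\eps)$, obtaining an $\eps'$-approximation $Q$ of $K^\circ$ of size $O(1/\eps^{(d-1)/2})$; taking for $Q$ the representation \emph{opposite} to the one ultimately wanted for $K$, the polar $Q^\circ$ is a polytope of size $O(1/\eps^{(d-1)/2})$ in the wanted representation, and undoing the affine transformations (which again preserve $\eps$-approximation) gives the answer. Every step past the first runs on instances of size $n$ or $O(1/\eps^{(d-1)/2})$, so the total time is again $O(n\log\inv\eps+1/\eps^{(d-1)/2+\alpha})$.

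The step that needs genuine care --- and the one I would flag as the main obstacle --- is showing $Q^\circ$ is an $O(\eps)$-approximation of $K$, i.e., that polar duality sends an $\eps'$-approximation of a fat, constant-diameter body $C$ to an $O(\eps')$-approximation of $C^\circ$ (with the inner/outer type interchanged, harmless by the conversion cited in the footnote to Dudley's construction). Concretely, for fat $C$ one has $C_{\eps'}\subseteq C\oplus B(O,\eps'\diam(C))\subseteq (1+O(\eps'/\gamma))\,C$, where $\gamma$ is the fatness constant, so an outer $\eps'$-approximation $Q$ satisfies $C\subseteq Q\subseteq (1+O(\eps'/\gamma))\,C$; taking polars reverses the inclusions and inverts the scalings, giving $(1+O(\eps'/\gamma))^{-1}C^\circ\subseteq Q^\circ\subseteq C^\circ$, and a body sandwiched between $(1-\Theta(\eps'))C^\circ$ and $C^\circ$ (two scalings about the interior point $O$) has every directional width within a factor $1-\Theta(\eps')$ of that of $C^\circ$, hence is an inner $\Theta(\eps')$-approximation. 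Since fattening makes $\gamma$ a constant and $\eps$-approximations compose with essentially additive loss in $\eps$, choosing $\eps'$ a small enough constant multiple of $\eps$ absorbs all constant factors. The remaining ingredients --- $T(K_\eps)=(T(K))_\eps$ for non-singular affine $T$, linear-time computation of a sandwiching box of an $H$-polytope, and composition of approximations --- are routine.
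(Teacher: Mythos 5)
Your proof is correct and follows essentially the same route as the paper: the point-represented case is exactly Theorem~\ref{thm:minkowski} with $B=\{O\}$ (you effectively inline that theorem's proof, namely simulating Dudley/Bronshteyn--Ivanov via binary search driven by the $\eps$-approximate intersection oracle of Lemma~\ref{lem:widthtomembership}), and the halfspace-represented case is handled by polar duality after fattening. The one difference is that you supply the polar-duality argument---that for a fat body containing $O$, a sandwich $C\subseteq Q\subseteq(1+O(\eps))C$ polarizes to $(1+O(\eps))^{-1}C^\circ\subseteq Q^\circ\subseteq C^\circ$, hence an inner $O(\eps)$-approximation---whereas the paper delegates precisely this step to Lemma~2.9 of the full version of~\cite{AFM17b}; your fleshed-out version is sound and matches what that citation is standing in for.
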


\begin{proof}
The case when the input is represented by points is a trivial case of Theorem~\ref{thm:minkowski}, where $B = \{O\}$. For the alternative case, it suffices to obtain an $\eps$-approximation of the polar polytope after fattening. (For details see Lemma 2.9 of the full version of~\cite{AFM17b}.)
\end{proof}

We remind the reader that Agarwal \etal~\cite{AGHRS00} showed that the width of a convex body $K$ is equal to the minimum distance from the origin to the boundary of the convex body $K \oplus (-K)$. To obtain Theorem~\ref{thm:width}, we compute Dudley's approximation of $K \oplus (-K)$ and then we determine the closest point to the origin among the $O(1/\eps^{(d-1)/2})$ bounding hyperplanes of the approximation.


\pdfbookmark[1]{References}{s:ref}
\bibliographystyle{abbrv}
\bibliography{shortcuts,width}

\end{document}